\def\BibTeX{{\rm B\kern-.05em{\sc i\kern-.025em b}\kern-.08em
    T\kern-.1667em\lower.7ex\hbox{E}\kern-.125emX}}
\definecolor{ref}{RGB}{0,50,160}
\newtheorem{theorem}{Theorem}
\newtheorem{lemma}{Lemma}
\newtheorem{assumption}{Assumption}
\begin{document}
\title{Strategic Decision-Making Under Uncertainty through Bi-Level Game Theory and Distributionally Robust Optimization}
\author{Jiachen Shen, Jian Shi, Lei Fan, Chenye Wu, Dan Wang, Choong Seon Hong and Zhu Han

\thanks{J. Shen, J. Shi, L. Fan, and Z. Han are with the Departments of Electrical and Computer Engineering and Engineering Technology, University of Houston, Houston, TX, USA. C. Wu is with The Chinese University of Hong Kong, Shenzhen, China. D. Wang is with The Hong Kong Polytechnic University, Hong Kong. C. S. Hong is with Kyung Hee University, Seoul, South Korea. (e-mail: \{jshen24, jshi23@central, lfan8@central, zhan2\}@uh.edu, chenyewu@yeah.net, dan.wang@polyu.edu.hk, cshong@khu.ac.kr).
}}


\maketitle

\begin{abstract}
In strategic scenarios where decision-makers operate at different hierarchical levels, traditional optimization methods are often inadequate for handling uncertainties from incomplete information or unpredictable external factors. To fill this gap, we introduce a mathematical framework that integrates bi-level game theory with distributionally robust optimization (DRO), particularly suited for complex network systems. Our approach leverages the hierarchical structure of bi-level games to model leader-follower interactions while incorporating distributional robustness to guard against worst-case probability distributions. To ensure computational tractability, the Karush-Kuhn-Tucker (KKT) conditions are used to transform the bi-level challenge into a more manageable single-level model, and the infinite-dimensional DRO problem is reformulated into a finite equivalent. We propose a generalized algorithm to solve this integrated model. Simulation results validate our framework's efficacy, demonstrating that under high uncertainty, the proposed model achieves up to a 22\% cost reduction compared to traditional stochastic methods while maintaining a service level of over 90\%. This highlights its potential to significantly improve decision quality and robustness in networked systems such as transportation and communication networks.

\end{abstract}
\begin{IEEEkeywords}
bi-level optimization, distributionally robust optimization, game theory, network systems, uncertainty modeling
\end{IEEEkeywords}

\section{Introduction}
In the realm of optimization theory, recent advancements have significantly transformed decision-making processes, particularly under uncertainties. The increasing complexity of modern systems, driven by factors such as globalization, technological advancements, and the dynamic nature of markets, has necessitated the development of more sophisticated optimization techniques. A notable development in this field is the rise of DRO, which has gained prominence due to its effectiveness in handling uncertainties across various domains, such as finance, supply chain management, and energy systems \cite{bard1983algorithm, ben-tal2015robust}. DRO’s strength lies in its ability to account for the worst-case scenarios over a set of probability distributions, making it a powerful tool for robust decision-making under uncertain conditions \cite{sinha2017evolutionary}. This approach not only mitigates the risks associated with adverse outcomes but also provides decision-makers with confidence in the stability and resilience of their solutions, even in the face of highly variable or incomplete data. 

Simultaneously, game-theoretic frameworks have become critical for modeling strategic interactions. Within this domain, bi-level optimization has established itself as a key tool for hierarchical decision-making. This approach models scenarios where a leader's decisions influence a follower's responses, creating a nested optimization structure that is particularly useful in power systems, communication networks, and transportation networks. As a general mathematical framework, it is important to distinguish it from the classical Stackelberg game, a well-known economic instance. The broader term `bi-level optimization' is preferred when the focus extends beyond specific economic models to general mathematical methods---such as integration with DRO---and to applications in diverse domains like the network systems previously mentioned.

Beyond this leader-follower structure, other advanced game-theoretic strategies are also being developed to enforce robust cooperation under uncertainty. In particular, zero-determinant strategies---originally formulated for repeated games---have been extended to finite games with implementation errors, enabling a player to unilaterally enforce linear payoff relationships even when actions are executed imperfectly \cite{ZD_Error2021}. Such strategies have been applied in group decision-making to sustain cooperation in noisy environments \cite{ZD_Cooperation2022}. Concurrently, studies on the co-evolutionary dynamics of threshold public goods games with collective-risk environment feedback reveal how populations adapt when contributions and risk thresholds interact under stochastic shocks \cite{PGG_CoEvo2023}. These developments highlight the relevance of robust strategic design in uncertain multi-agent systems, motivating our bi-level distributionally robust optimization approach.

In real world applications, decision makers face interdependent decision and uncertainties. The integration of bi-level optimization and DRO provides a new way to make the optimal decision in these practical applications. However, the DRO bi-level optimization framework is under explored. The primary challenge in this integrated framework inherent the complexity of bi-level optimization, which is computationally demanding because of issues such as non-convexity and difficulty in finding the global optimal solution. Incorporating DRO compounds these challenges by necessitating the consideration of worst-case distributions, further complicating the solution landscape \cite{yang1995heuristic}. Despite these challenges, the potential benefits of this integration are significant, offering a robust approach that can handle both the hierarchical structure of decision-making and the deep uncertainties inherent in real-world problems. The integrated framework not only enhances the robustness of the decision-making process but also provides a more comprehensive approach to managing risk in complex systems.

In this paper, our proposed framework is formulated as a bi-level optimization problem, which serves as the mathematical representation of a Stackelberg game, which extends the classical bi-level framework to handle profound uncertainty. By integrating DRO, our model manages ambiguity in the underlying probability distributions, deriving strategies that are resilient to shifts in the stochastic process itself. This integrated approach equips decision-makers to more effectively navigate complex hierarchical systems and ensure their strategies are adaptable to potential disruptions.

Developing a tractable bi-level DRO framework requires overcoming significant technical challenges. The primary hurdle is integrating the intrinsically difficult, nested structure of bi-level optimization with the infinite-dimensional nature of DRO. This combination creates a complex, large-scale optimization problem for which standard solvers are inadequate. To address these hurdles, this paper makes several important contributions:

\begin{itemize}
    \item We propose a novel framework that models the strategic interactions in networked systems as a bi-level game, while ensuring the leader's decision is robust against worst-case distributional shifts. To the best of our knowledge, this is the first work to address this specific intersection of hierarchical games and distributional robustness in such systems.

    \item We introduce a unique MPEC (Mathematical Program with Equilibrium Constraints) formulation where the equilibrium constraints themselves contain a nested, dualized Wasserstein DRO problem. We develop a tailored algorithmic solution combining proximal and cutting-plane methods to solve this non-smooth, bi-level formulation.

    \item We provide theoretical insights into the convergence properties and computational efficiency of the proposed model. By employing dual reformulation techniques, we ensure the computational feasibility of the framework, with proofs of convergence under certain conditions.

    \item We demonstrate the practical utility of the proposed framework in various network systems, such as transportation and communication networks. The model’s robustness and adaptability are validated through numerical experiments, which highlight its effectiveness in managing uncertainty.
\end{itemize}

The remainder of this paper is organized as follows: Section \ref{sec2} reviews the related literature on bi-level optimization and distributionally robust optimization, highlighting key advancements and gaps. In Section \ref{sec3}, we present the model formulation, detailing the integration of bi-level optimization with DRO, including the mathematical underpinnings and key assumptions. Section \ref{sec4} outlines the algorithmic framework developed to solve the integrated model, with a focus on the computational techniques employed. In Section \ref{sec5}, we conduct simulation experiments to validate the proposed framework, demonstrating its practical applicability and robustness. Finally, Section \ref{sec6} concludes the paper with a summary of findings, implications for future research, and potential real-world applications.
For clarity and easy reference, the key notation used throughout this paper is summarized in Table \ref{tab:notation}.
\begin{table}[h!]
\caption{Summary of Notation}
\label{tab:notation}
\centering
\begin{tabular}{@{}cl@{}}
\toprule
\textbf{Symbol} & \textbf{Description} \\
\midrule
\multicolumn{2}{l}{\textit{Decision Variables}} \\
$x$ & Leader's decision variables \\
$y$ & Follower's decision variables \\
\midrule
\multicolumn{2}{l}{\textit{Problem Functions}} \\
$f(x,y,\xi)$ & Leader's objective function \\
$g_i(x,y,\xi)$ & Leader's constraint functions \\
$h(x,y,\xi)$ & Follower's objective functional \\
$k_j(x,y,\xi)$ & Follower's constraint functions \\
\midrule
\multicolumn{2}{l}{\textit{Uncertainty and DRO}} \\
$\xi$ & Vector of uncertain parameters \\
$\Omega$ & Support set of the uncertainty $\xi$ \\
$P$ & A probability distribution of $\xi$ \\
$\mathcal{P}$ & Ambiguity set of probability distributions \\
$\hat{\mathbb{P}}_N$ & Empirical distribution from N samples \\
$\epsilon$ & Radius of the Wasserstein ambiguity set \\
$d_W$ & The Wasserstein distance metric \\
\midrule
\multicolumn{2}{l}{\textit{Optimization and Algorithm}} \\
$\mu_j$ & Lagrange multipliers for follower's constraints $k_j$ \\
$t, T$ & Iteration counter and maximum iterations \\
$\eta_t$ & Proximal parameter at iteration $t$ \\
$L(\cdot)$ & The Lagrangian function \\
\bottomrule
\end{tabular}
\end{table}

\section{Literature Review}\label{sec2}
\subsection{Foundational Concepts in Hierarchical and Robust Optimization}
Modern decision-making in complex systems often involves navigating both structural hierarchies and parametric uncertainty. Bi-level optimization provides a powerful framework for modeling hierarchical scenarios where a leader's decisions influence a follower's optimal response, creating a nested problem structure \cite{sun2022bi-level,liao2022privacy}. This approach is particularly applicable to network systems, such as supply chains and smart grids \cite{mhaisen2022optimal,wang2020distributionally}. A standard method for solving such problems involves using the KKT conditions to transform the follower's problem, effectively converting the bi-level model into a more tractable single-level problem \cite{sun2022bi-level,liao2022privacy}. Complementing this structural approach, DRO offers a sophisticated paradigm for handling uncertainty by optimizing against a worst-case scenario over an entire set of possible probability distributions \cite{9966489,9107344}. The appeal of DRO lies in its ability to produce solutions that are resilient to model misspecification, leading to its wide application in domains like finance and energy systems \cite{10478176, delage2010distributionally}. The integration of these two paradigms---bi-level optimization for structural hierarchy and DRO for profound uncertainty---creates the robust framework for strategic decision-making that this paper explores.

\subsection{Advances in Integrating Bi-Level Optimization with Distributionally Robust Optimization}
While bi-level optimization and DRO are individually well-established, their direct integration is an active and challenging research frontier, motivated by the need to solve hierarchical decision-making problems under deep uncertainty \cite{bental2004robust}. Early explorations into this domain often relied on significant simplifying assumptions, such as restricting the uncertainty to specific, simple parametric forms or assuming that uncertainty only affects the follower's problem. These assumptions, while enabling initial formulations, limited the applicability of the models to more complex, real-world scenarios where data is incomplete and distributional forms are unknown.

More recent research has begun to tackle these challenges by incorporating more sophisticated uncertainty models. For instance, some works have explored bi-level formulations in the context of decentralized energy management, considering factors like grid reliability and agent misbehavior under uncertainty \cite{bertsimas2011robust}. Concurrently, advancements in data-driven DRO, particularly methods using the Wasserstein metric to define ambiguity sets, have provided powerful new tools for single-level optimization \cite{luo2019decomposition, EsfahaniKuhn2018}. While general solution methods for convex bi-level problems continue to advance \cite{Nguyen2023}, embedding these advanced DRO formulations within such a structure is highly non-trivial. The primary difficulty lies in the analytical and computational complexity of solving an optimization problem where the constraints themselves contain a nested, non-smooth, worst-case expectation problem. As a result, the development of tractable frameworks for bi-level games under modern, data-driven DRO, especially for problems with coupled constraints like those found in integrated energy systems \cite{yan2023bilevel}, remains an area of active investigation. This paper contributes to this frontier by developing a solvable method specifically for this class of challenging problems.

\begin{figure}[t]
    \centering
    \includegraphics[width=\linewidth]{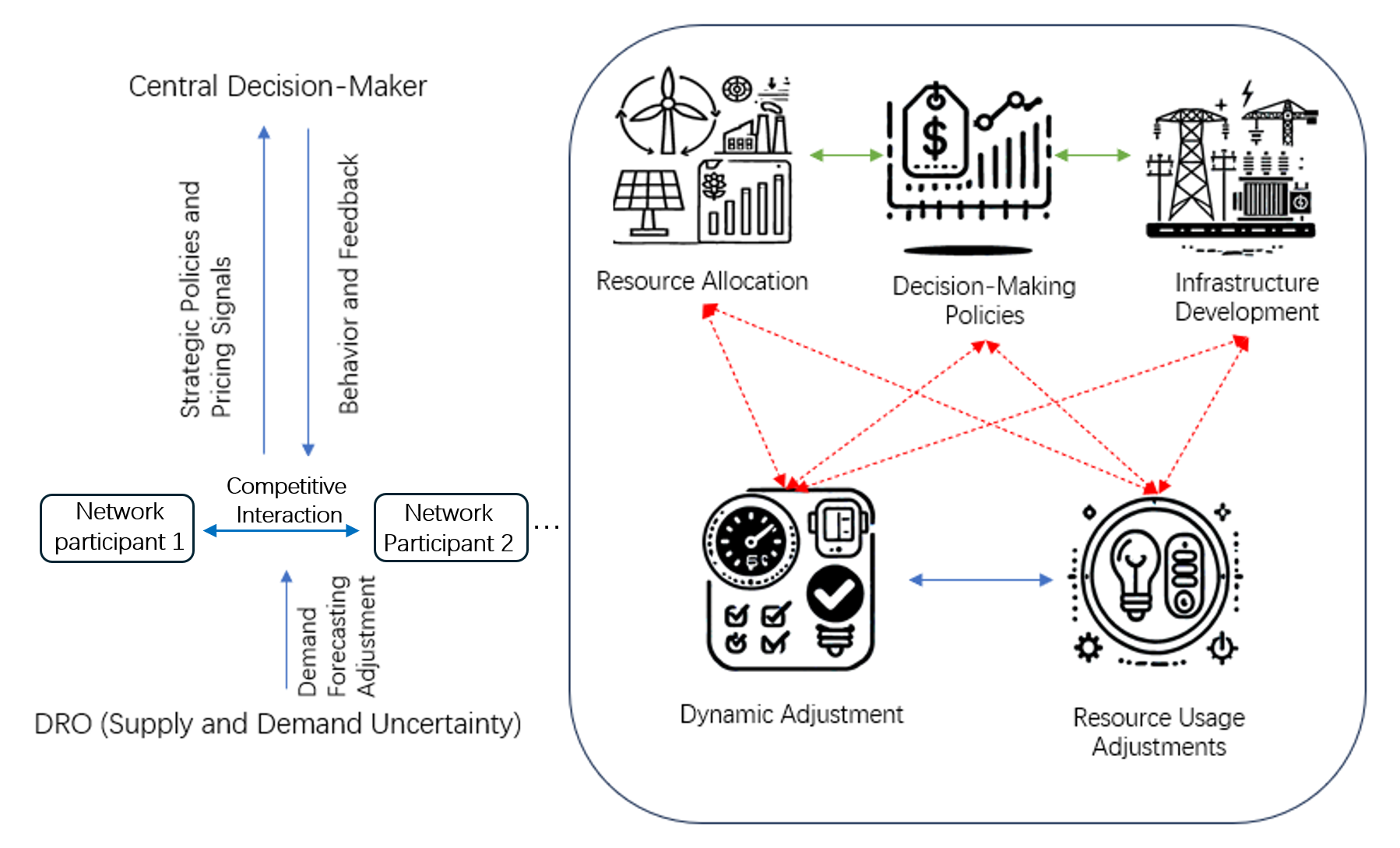}
    \caption{Applications of DRO in General Networks.}
    \label{fig:DRO_applications}
\end{figure}

\subsection{Applications in Complex Network Systems}
The hierarchical and uncertain nature of complex network systems makes them a prime application area for the advanced optimization frameworks discussed above. The leader-follower structure inherent in bi-level optimization is a natural fit for modeling interactions in telecommunications, supply chains, and smart grids, where central operators or planners set policies that influence the behavior of individual users or local agents \cite{mhaisen2022optimal, wang2020distributionally}. For instance, strategic leader-level decisions, such as robust operational planning or optimal sensor placement for monitoring, are critical for ensuring overall system resilience and efficiency \cite{zi2022distributionally, zhang2020robust}. In many of these systems, pricing mechanisms are a key tool used by the leader to manage network resources and guide follower behavior \cite{raveendran2022pricing}.

Simultaneously, distributionally robust optimization has been increasingly applied to network problems to manage inherent uncertainties. Specific applications include designing robust supply chain networks under demand uncertainty \cite{wang2020distributionally} and optimizing maintenance schedules in power systems against failures \cite{zheng2023distancebased}. These works demonstrate the value of DRO in creating reliable network solutions. However, many real-world network problems exhibit both hierarchical decision structures and deep uncertainty, creating a need for integrated models. For example, recent applications include employing bi-level optimization for reactive power management in integrated electricity-gas systems \cite{Zhao2021}. A network operator must set robust policies or prices in the face of uncertain market conditions, while multiple users or sub-systems react competitively, a dynamic explored in game-theoretic resource allocation models \cite{zhang2020distributed, raveendran2019vlc}.

The complex interplay between a central decision-maker and network participants, as discussed in the literature, is conceptually illustrated in Figure 1. This figure encapsulates the dynamics where a leader's high-level strategic decisions on key operational areas---such as Resource Allocation, the formulation of Decision-Making Policies, and long-term Infrastructure Development---are transmitted downwards to followers via "Strategic Policies and Pricing Signals." In response, the followers engage in competitive interaction, leading to Dynamic Adjustment and Resource Usage Adjustments. The outcomes of these actions are then conveyed back to the leader as "Behavior and Feedback." Our work is situated within this context, addressing the critical gap where both the hierarchical (bi-level) and DRO aspects of this interaction must be handled simultaneously to achieve truly resilient and efficient network management.
\section{Model Formulation}\label{sec3}
In this section, we present the mathematical framework. Our model is conceptualized as a two-stage decision problem under uncertainty. In our game-theoretic context, the "payoff" for each player is defined by their respective objective. The leader aims to minimize a system-wide cost, so their payoff is the negative of their objective function, while the follower minimizes local costs. The effectiveness of this payoff design in handling uncertainty stems directly from our DRO approach. Rather than optimizing against a single assumed probability distribution, our framework seeks to find a solution that performs best under the worst-case distribution from an entire set of possibilities. This formulation is essential for capturing the dynamic interactions between leaders and followers while ensuring decisions are robust against deep uncertainty. We begin by detailing the core mathematical components of the model and explicitly formulating the leader-follower dynamics. To illustrate the practical relevance of our theoretical development, we then provide a detailed simulation scenario description.

\subsection{Upper-Level Problem (Leader)}
The leader's optimization problem is expressed as:
\[
\min_{x\in\mathbb{R}^{n}} \max_{P\in\mathcal{P}} \mathbb{E}_{P}[f(x,y,\xi)]
\]
\[
s.t.\ \ g_{i}(x,y^{*}(x),\xi)\le0, \quad \forall \xi \in \Xi, \quad \forall i
\]

\noindent where the constraint must hold robustly for all realizations of the uncertainty \(\xi\) within its support set \(\Xi\),  \(x \in \mathbb{R}^n\) represents the leader's decision variables, \(y^*(x)\) is the follower's optimal response function, \(\xi\) is the vector of uncertain parameters, \(g_i\) are the leader's constraint functions, and \(\mathbb{E}_P[\cdot]\) denotes the expectation taken with respect to a probability distribution \(P\) from the ambiguity set \(\mathcal{P}\). The set \(\mathcal{P}\) represents the ambiguity set of probability distributions. The function \(f\) is presumed to be convex with respect to the uncertain parameter \(\xi\).

\subsection{Lower-Level Problem (Follower)}
Parameterized by the leader's decision \( x \), the follower's optimization problem minimizes a cost functional \( h \) subject to constraints \( k_j \):

\begin{equation}
\begin{aligned}
& \max_{y \in \mathbb{R}^m} h(x, y, \xi) \\
& \text{s.t.} \quad k_j(x,y,\xi)\leq 0,\ \forall j,
\end{aligned}
\end{equation}

\noindent Here, \( y \) represents the follower's decision variables and \( y \in \mathbb{R}^m \). The constraints \( k_j \) embody the restrictions inherent to the follower's problem.

\subsection{Integrative Formulation}

To integrate the follower's problem into the leader's optimization framework, we utilize the KKT conditions as constraints for the reformulated single-level problem with DRO:
\begin{align}
    &\nabla_yh(x,y,\xi)+\sum_i\mu_i\nabla_yk_i(x,y,\xi)=0\label{Stationary}\\
    &k_j(x,y,\xi)\leq0,\ \forall j\label{PrimalFeasible2}\\
    &\mu_j\geq0,\ \forall j\label{Dual Feasible}\\
    &0\leq\mu_i k_j(x,y,\xi)\geq0,\ \forall j,\label{Complementary Slackness}
\end{align}
\noindent where (\ref{Stationary}) denotes the stationary condition, (\ref{PrimalFeasible2}) and (\ref{Dual Feasible}) denote the feasibility of primal and dual problem, (\ref{Complementary Slackness}) denotes the complementary slackness condition. The single-level reformulation encapsulates the leader's DRO objective and the follower's optimality conditions within a single optimization framework:
\begin{equation}
\label{eq:1}
\begin{aligned}
    \min_{x\in X, y \in Y}&\max_{\mathbb{P}\in\mathcal{B}_\varepsilon (\hat{\mathbb{P}}_N)} \mathbb{E}_{\mathbb{P}}[f(x,y,\xi)] \\
    \text{s.t.} \quad &(2)-(5).
\end{aligned}
\end{equation}
The ambiguity set $\mathcal{B}_\varepsilon (\hat{\mathbb{P}}_N)$ is  defined on the space $\mathcal{M}(\mathcal{E})$ of all probability distributions $\mathbb{P}$ supported on $\Omega$ with
$\mathbb{E}_{\mathbb{P}}\left[\|\xi\|\right] = \int_{\Omega} \|\xi\| \mathbb{P}(d\xi) < \infty$ with the formula

\[
\mathcal{B}_\varepsilon (\hat{\mathbb{P}}_N) := \left\{ \mathbb{P} \in \mathcal{M}(\Omega) : d_W(\hat{\mathbb{P}}_N, \mathbb{P}) \leq \varepsilon \right\},
\]
\noindent where $d_W$ represents the Wasserstein distance. This ambiguity set  can be viewed as the Wasserstein ball of radius $\varepsilon$ centered at the empirical distribution $\hat{\mathbb{P}}_N$.
\subsection{Dual Reformulation}
The worst-case expectation problem \eqref{eq:1} constitutes an infinite-dimensional optimization problem over probability distributions, making it seemingly intractable (In this
context, a problem is considered “tractable” if it can be solved to optimality in a time that is polynomial in the input size, typically by reformulating it as a finite-dimensional
convex program for which efficient algorithms exist). However, leveraging the dual reformulation techniques commonly used in robust optimization \cite{zhang2017multi}, this problem can be re-expressed as a finite-dimensional convex program. Key theoretical results, such as convex reduction, provide the necessary conditions under which this dual reformulation is applicable, ensuring the solution's optimality and feasibility \cite{zhao2021reactive}. The convexity assumption plays a crucial role in guaranteeing that the transformed problem remains within the realm of convex optimization, thus making the application of standard solvers feasible \cite{aghamohammadi2024decentralized}.

The tractability results in the remainder of this paper are predicated on the following convexity assumption.

\begin{assumption}[Convexity] The uncertainty set $\Omega\subseteq \mathbb{R}^m$ is convex and closed. The objective function \(f(x,y,\xi)\), the follower's cost functional \(h(x,y,\xi)\), and the constraints \(g_i(x,y,\xi)\) and \(k_j(x,y,\xi)\) are convex with respect to the \textbf{decision variables \((x,y)\)} for any fixed value of \(\xi\).
\end{assumption}
While Assumption 1 is critical for ensuring tractability, we acknowledge it may not hold universally, as real-world problems can feature non-convexities from integer variables (e.g., facility location choices). However, adopting this assumption is a deliberate and standard choice that allows us to establish a foundational framework and analyze the core complexities of the bi-level DRO structure. This framework remains directly applicable to a significant class of problems where convexity is a reasonable premise, including many resource allocation, pricing, and operational planning scenarios.

Then, by leveraging dual variables $\lambda$ for the follower's problem, we reduce the above problem into a single-level convex optimization model \cite{tsai2024distributionally} as the following theorem:

\begin{theorem}[Convex reduction]
If the convexity Assumption 1 holds, then for any \(\varepsilon \geq 0\) the worst-case expectation (8) equals the optimal value of the finite convex program
\begin{equation*}
\begin{aligned}
& \underset{x, y, \lambda, s_i, z_{i}, v_{i}}{\text{inf}} \lambda \varepsilon + \frac{1}{N} \sum_{i=1}^{N} [-f]^*(x,y,z_{i} - v_{i}) + \sigma_{\Omega}(v_{i}) - \langle z_{i}, \hat{\xi}_i\rangle  \\
& \text{s.t.}\quad\|z_{i}\|_* \leq \lambda ,\quad \forall i.\\
&\quad\quad(2)-(5),
\end{aligned}
\end{equation*}
\end{theorem}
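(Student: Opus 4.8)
The plan is to prove the theorem as an instance of the data-driven Wasserstein distributionally robust duality of Esfahani and Kuhn \cite{EsfahaniKuhn2018}, applied to the inner worst-case expectation $\max_{\mathbb{P}\in\mathcal{B}_\varepsilon(\hat{\mathbb{P}}_N)}\mathbb{E}_{\mathbb{P}}[f(x,y,\xi)]$ in \eqref{eq:1} with the decisions $(x,y)$ held fixed, and then to observe that the follower's KKT system (2)--(5) is inert with respect to that inner problem and therefore carries over unchanged. Write $\hat{\mathbb{P}}_N=\frac1N\sum_{i=1}^N\delta_{\hat{\xi}_i}$. The first step exploits the discreteness of the ball's center: every coupling of $\hat{\mathbb{P}}_N$ with a candidate $\mathbb{P}$ disintegrates into conditional kernels $\mathbb{P}_i\in\mathcal{M}(\Omega)$, one per atom $\hat{\xi}_i$, with $\mathbb{P}=\frac1N\sum_i\mathbb{P}_i$ and $d_W(\hat{\mathbb{P}}_N,\mathbb{P})=\inf\bigl\{\tfrac1N\sum_i\int_\Omega\|\xi-\hat{\xi}_i\|\,\mathbb{P}_i(d\xi)\bigr\}$. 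Substituting this identity turns the worst-case expectation into the generalized moment problem of maximizing $\frac1N\sum_i\int_\Omega f(x,y,\xi)\,\mathbb{P}_i(d\xi)$ over $\mathbb{P}_i\in\mathcal{M}(\Omega)$ subject to the single budget constraint $\frac1N\sum_i\int_\Omega\|\xi-\hat{\xi}_i\|\,\mathbb{P}_i(d\xi)\le\varepsilon$.

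Next I would dualize this single constraint with a multiplier $\lambda\ge0$ and invoke strong duality for the moment problem: the center $\mathbb{P}=\hat{\mathbb{P}}_N$ is strictly feasible whenever $\varepsilon>0$ (the degenerate case $\varepsilon=0$ collapses to the sample average and is handled directly), which supplies the constraint qualification. For fixed $\lambda$ the maximization over each kernel $\mathbb{P}_i$ is attained by a Dirac measure, so the problem reduces to
\[
\inf_{\lambda\ge0}\ \lambda\varepsilon+\frac1N\sum_{i=1}^N\ \sup_{\xi\in\Omega}\bigl[\,f(x,y,\xi)-\lambda\|\xi-\hat{\xi}_i\|\,\bigr].
\]
To turn each inner supremum into a finite convex program, I would use the dual-norm identity $-\lambda\|\xi-\hat{\xi}_i\|=\inf_{\|z_i\|_*\le\lambda}\langle z_i,\hat{\xi}_i-\xi\rangle$ and then exchange $\sup_\xi$ with $\inf_{z_i}$ via Sion's minimax theorem, which is licensed by closedness and convexity of $\Omega$, the regularity of $f(x,y,\cdot)$ needed for $[-f]^*$ to be a proper closed convex function, and the coercivity contributed by the $-\lambda\|\cdot\|$ term. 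The remaining unconstrained supremum over $\xi$ is, up to sign, the Fenchel conjugate of $-f(x,y,\cdot)+\chi_\Omega$, which splits as the infimal convolution $[-f]^*\,\square\,\sigma_\Omega$; introducing the auxiliary vectors $v_i$ to carry out this split produces exactly the objective $\lambda\varepsilon+\frac1N\sum_i\bigl([-f]^*(x,y,z_i-v_i)+\sigma_\Omega(v_i)-\langle z_i,\hat{\xi}_i\rangle\bigr)$ together with the norm constraint $\|z_i\|_*\le\lambda$.

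Finally, the equilibrium constraints (2)--(5) involve only $(x,y,\mu)$ and never the inner distribution, so they pass through the entire reduction unchanged and are simply reattached; under Assumption 1 the objective pieces $[-f]^*$ and $\sigma_\Omega$ are convex in $(x,y,z_i,v_i)$ and the constraint functions $g_i,h,k_j$ are convex in $(x,y)$, which yields the stated finite reformulation. I expect the main obstacle to be making the two interchanges rigorous: (i) strong Lagrangian duality for the infinite-dimensional moment problem, which needs the Slater point above together with a measurable-selection argument to realize the worst-case kernels; and (ii) the $\sup_\xi$/$\inf_{z_i}$ swap, which requires either compactness of $\Omega$ or the coercivity argument, plus properness and lower semicontinuity of $-f(x,y,\cdot)$ so that $[-f]^*$ and $\sigma_\Omega$ are well defined. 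Once these regularity conditions are pinned down, the rest is routine conjugate-duality bookkeeping.
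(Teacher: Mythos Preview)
Your proposal is correct and follows the same route the paper invokes: the paper does not give an independent proof but simply states that the result ``follows a similar approach to the one presented in section 4.2 in \cite{liang2024learning}'', i.e., the standard Wasserstein DRO strong-duality machinery, and your detailed Esfahani--Kuhn derivation is precisely that. In fact your treatment---disintegration against the empirical atoms, Lagrangian dualization of the transport budget, the dual-norm/minimax swap, and the conjugate/infimal-convolution split introducing the $v_i$---is considerably more explicit than anything the paper itself provides.
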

\noindent where \(s_i\) is the auxiliary slack variable introduced to handle the reformulation and assists with the linearization of the problem, \(z_i\) are dual variables associated with the constraints, capturing the sensitivity of the objective function to changes in uncertainty \( \xi \), \(v_i\) is the  variable related to the uncertainty set \( \Omega \), ensuring that the uncertainty over probability distributions is properly handled. The term \([-f]^*(x,y,z_{i} - v_{i})\) denotes the conjugate of \(-f\) with respect to \(\xi\), and \(\|z_{i}\|_*\) represents the dual norm of \(z_{i}\). Moreover, \(\chi_\Omega\) represents the characteristic function of \(\Omega\), and \(\sigma_\Omega\) is its conjugate, that is, the support function of \(\Omega\). follows a similar approach to the one presented in section 4.2 in \cite{liang2024learning}.

The dual reformulation enables the use of conventional optimization algorithms by transforming the inherently complex bi-level structure into a tractable format. To demonstrate the practical application of the techniques outlined in the model formulation, we present a simulation scenario that showcases the model's effectiveness in a realistic context.

\subsection{Simulation Scenario Description}
Our formulation directly embeds the structural properties of the target networked system. For example, in the supply chain application detailed below, the follower’s constraints explicitly model the flow conservation laws at each node, and the transportation costs in the leader’s objective function are dependent on the network topology. This embedding ensures that our solutions are not only statistically robust but also physically feasible and directly applicable to real-world infrastructure systems.

In this scenario, a central planner and facility managers work together to minimize logistics costs amid uncertain demand. While typical supply chain networks involve integer or binary decision variables (e.g., facility openings or routing decisions), which would introduce non-convexities, we intentionally focus on continuous decision variables such as inventory levels and transportation quantities. This modeling choice is a necessary simplification designed to isolate the core scientific challenge of this paper: understanding the fundamental interplay between hierarchical game dynamics and distributional uncertainty. By maintaining a convex and tractable problem structure, we can clearly demonstrate the unique value and behavior of our bi-level DRO framework without the confounding effects of combinatorial complexity. Incorporating integer variables is a valuable, yet distinct, direction for future work.
\subsubsection{Upper-Level Problem}
The objective of the central planner is to minimize the total expected logistics cost. This cost includes inventory holding costs, transportation costs, and penalties for unmet demand. The problem can be mathematically formulated as:
\[
\begin{aligned}
&f(x, y) = \sum_{i \in I} c_i x_i + \sum_{(i,j) \in A} t_{ij} y_{ij} 
\\&+ \sup_{P \in \mathcal{P}} \mathbb{E}_{\xi \sim P} \{\sum_{i \in I} p_i \max (0, d_i(\xi) - x_i - \sum_{j:(i,j) \in A} y_{ij} ) 
\\&+ g(z^*(x, y, \xi))\},
\end{aligned}
\]
\noindent where \( x_i \) represents the inventory level at location \( i \), \( y_{ij} \) denotes the quantity transported from location \( i \) to location \( j \), \( c_i \) is the cost per unit of inventory, \( t_{ij} \) is the transportation cost per unit, \( p_i \) is the penalty cost for unmet demand, and \( d_i(\xi) \) represents the random demand at location \( i \). The term \( g(z^*(x, y, \xi)) \) refers to the cost associated with the optimal response of the followers.

The constraints for this problem are:
\[
0 \leq x_i \leq C_i, \quad \forall i \in I
\]
\[
0 \leq y_{ij} \leq T_{ij}, \quad \forall (i,j) \in A,
\]
\noindent where \( C_i \) is the storage capacity at location \( i \), and \( T_{ij} \) is the transportation capacity between locations \( i \) and \( j \).

\subsubsection{Lower-Level Problem}
The objective of the facility managers is to minimize local operational costs by optimally allocating resources based on the central planner's decisions. The problem is formulated as:
\[
g(x, y, z) = \sum_{i \in I} q_i z_i + \sum_{(i,j) \in J} r_{ij} z_{ij},
\]
\noindent where \( z_i \) represents the resource allocation at location \( i \), \( q_i \) is the cost associated with resource allocation at \( i \), and \( r_{ij} \) is the cost associated with resource flow between locations \( i \) and \( j \).

The constraints for the follower's problem are:
\[
0 \leq z_i \leq R_i(x_i, y_{ij}, \xi), \quad \forall i \in I
\]
\[
\sum_{j:(i,j)\in A} z_{ij} = \sum_{j:(j,i)\in A} z_{ji} + d_i(\xi) - x_i, \quad \forall i \in I,
\]
\noindent where \( R_i(x_i, y_{ij}, \xi) \) is the maximum feasible resource allocation at location \( i \), depending on the inventory levels \( x_i \), transportation quantities \( y_{ij} \), and realized demand \( \xi \).

\subsubsection{Integration of the Bi-Level Problem with DRO and Concrete Formulation}

To solve the problem more efficiently, the bi-level problem is reformulated into a single-level problem.

The reformulated objective function is:
\begin{equation}
\begin{aligned}
\min_{x,y,z,\lambda} \sup_{P \in \mathcal{P}} \mathbb{E}_{\xi \sim P} [ \sum_{i \in I} \left(c_i x_i + q_i z_i \right) + \sum_{(i,j) \in A} \left(t_{ij} y_{ij} + r_{ij} z_{ij}\right) 
\\+ \sum_{i \in I} p_i \max (0, d_i(\xi) - x_i - \sum_{j:(i,j)\in A} y_{ij} ) ],
\end{aligned}
\end{equation}

The constraints for the single-level problem include:
\[
\frac{\partial h(x, y, z, \xi)}{\partial z_i} + \sum_{j} \mu_j \frac{\partial k_j(x, y, z, \xi)}{\partial z_i} = 0, \quad \forall i \in I,
\]
\[
k_j(x, y, z, \xi) \leq 0, \quad \forall j,
\]
\[
\mu_j \geq 0, \quad \forall j,
\]
\[
\mu_j k_j(x, y, z, \xi) = 0, \quad \forall j,
\]
In this reformulated problem, \( \lambda_j \) represents the Lagrange multipliers associated with the constraints \( g_j(x, y, z, \xi) \), ensuring that the follower’s decisions are optimal and feasible given the leader’s decisions.

\section{Algorithmic Framework}\label{sec4}
We develop an algorithmic framework to address the challenges posed by the bi-level DRO model depicted in Equation (8). Our approach integrates primal-dual updates with cutting-plane methods to handle the bi-level structure and distributional uncertainty effectively.

For the development of our algorithm, we impose the following assumptions to ensure mathematical tractability and convergence:
\begin{itemize}
    \item Continuity and Differentiability: All functions are assumed to be continuous and possess continuous first derivatives with respect to $(x, y)$.
    \item Compactness: The decision spaces $X$ and $Y$, as well as the space of dual variables $\lambda$, are assumed to be non-empty, closed, and bounded.
    \item Strong Duality: For the lower-level problem, we assume that strong duality holds, allowing us to work with the dual problem without a duality gap.
\end{itemize}

The following subsections provide a detailed explanation of the methods and techniques employed in the algorithmic framework. We begin by discussing the proximal dual update, which enhances the stability of dual ascent algorithms. Next, we introduce the cutting-plane primal update method, which is essential for handling non-differentiable terms within the objective function. We then outline the complete algorithm steps, detailing the sequence of operations necessary to solve the bi-level DRO problem. Finally, we conclude with a Convergence Analysis that offers theoretical guarantees for the stability and optimality of the proposed solution.

\subsection{Proximal Dual Update}

The proximal update method serves to enhance the stability of dual ascent algorithms in scenarios with non-smooth or poorly conditioned problems. For the bi-level DRO model, a proximal term augments the dual objective function, adding a regularization effect that promotes robust convergence behavior.

Incorporating the proximal regularization into the dual update, we adjust the dual variable vector $\lambda$ at iteration $t$ by solving the following proximal problem:
\begin{equation}
\lambda^{t+1} = \arg \min_{\lambda \geq 0} \left( L(x^t, y^t, z^t, v^t, \lambda) + \frac{1}{2\eta_t}\|\lambda - \lambda^t\|^2 \right)
\label{eq:2}
\end{equation}
\noindent where:
\begin{itemize}
    \item $L(x, y, z, v, \lambda)$ is the Lagrangian associated with the primal problem, expressed as:
    \begin{equation}
    L(x, y, z, v, \lambda) = f(x, y) + \lambda \left(g(x, y) - z\right) + v \cdot h(x, y),
    \label{eq9}
    \end{equation}
    where $f(x, y)$ is the objective function, $g(x, y)$ represents the constraints, and $h(x, y)$ accounts for the dual variables associated with constraint violations.
    \item $\eta_t$ represents the proximal parameter at iteration $t$.
    \item $\lambda^t$ is the current dual variable vector.
\end{itemize}

The term $\frac{1}{2\eta_t}\|\lambda - \lambda^t\|^2$ ensures the updated dual variables do not deviate excessively from the current iteration, thereby preventing oscillatory behavior and facilitating the convergence toward a dual optimal solution.

To provide a more intuitive understanding of this update, we can interpret \ref{eq9} from both an abstract and a practical standpoint. Conceptually, the equation represents the leader’s robust decision rule: before committing to a strategy \(x\), the leader anticipates the follower’s best response \(y\) and the most adverse distributional perturbation within the ambiguity set. In practical terms—e.g., in a supply chain—this is akin to planning for the highest possible penalty cost that could occur if demand patterns shift unfavorably, ensuring the chosen plan remains viable under worst-case deviations from historical data.
\subsection{Cutting-Plane Primal Update}
To manage the non-differentiable supremum term \( \sup_{P \in \mathcal{P}} \mathbb{E}_P[f(x, y, \xi)] \) within the objective function of the bi-level DRO model, we utilize the cutting-plane method. This method iteratively refines a piecewise-linear approximation of the non-differentiable term.

The primal update at iteration $t$ involves solving a linear program that approximates the supremum term by constructing a series of linear pieces:
\begin{equation}
\label{eq11}
\begin{aligned}
& \min_{x \in X, y \in Y} \left( \lambda^t \epsilon + \frac{1}{N} \sum_{i=1}^{N} s_i \right) \\
& \text{s.t.} \\
& s_i \geq [-f]^*(x,y,z_{i} - v_{i}) + \sigma_{\Omega}(v_{i}) - \langle z_{i}, \hat{\xi}_i\rangle, \quad \forall i, \\
&\|z_{i}\|_* \leq \lambda ,\quad \forall i,\\
& \nabla_y h(x, y, \hat{\xi}) + \sum_j \mu_j^t \nabla_y k_j(x, y, \hat{\xi}) = 0, \quad \forall \hat{\xi}, \\
& \mu_{j}^{t} k_{j}(x,y,\hat{\xi}) = 0,\quad \forall j, \forall\hat{\xi} \\
& g_i(x, y, \hat{\xi}) \leq 0, \quad k_j(x, y, \hat{\xi}) \leq 0, \quad \forall j, \hat{\xi}.
\end{aligned}
\end{equation}
Here, $s_i$ represents the linear approximation of the supremum term for scenario $\hat{\xi}_i$. The constraint $s_i \geq [-f]^*(x,y,z_{i} - v_{i}) + \sigma_{\Omega}(v_{i}) - \langle z_{i}, \hat{\xi}_i\rangle$ ensures that the approximation is a lower bound on the actual supremum, maintaining feasibility and optimality of the solution.

 At each iteration, the dual variable $\lambda$ is updated by solving the problem represented by \eqref{eq:2}. The Lagrangian \( L(x^t, y^t, z^t, v^t, \lambda) \) includes the objective function $f(x, y)$, the constraints $g(x, y)$, and dual variables $\lambda$ and $v$. The proximal regularization term $\frac{1}{2\eta_t}\|\lambda - \lambda^t\|^2$ ensures that the updated dual variables do not deviate excessively from the current iteration, promoting stability and convergence. The cutting-plane primal update refines the approximation of the supremum term while the proximal dual update adjusts the dual variables, ensuring the stability and convergence of the overall bi-level DRO model. Both updates work in tandem, with the dual variables influencing the primal update through the objective function and constraints, and vice versa.

\subsection{Algorithm Steps}
The following steps outline the algorithm to solve the given bi-level DRO model as shown in Algorithm~\ref{alg1}:

\begin{enumerate}
    \item Initialization: Set $x^0 \in X$, $y^0 \in Y$, $\lambda^0 \geq 0$, and iteration counter $t = 0$.
    \item Primal Update: For the current $\lambda^t$, solve the primal problem while linearizing the supremum term to obtain the next iterates $x^{t+1}$ and $y^{t+1}$.
    \item Supremum Evaluation: Calculate the supremum term using the updated primal variables for the objective function approximation.
    \item Dual Update: Apply the proximal point method to update $\lambda^{t+1}$ by solving the corresponding dual problem.
    \item Convergence Check: If the primal and dual residuals are below a predefined threshold $\epsilon$, or the maximum number of iterations $T$ is reached, terminate the algorithm.
    \item Output: The final iterates $x^T, y^T, \lambda^T$ approximate the solution to the bi-level DRO problem.
\end{enumerate}

\begin{algorithm}[t]
\caption{Bi-Level DRO Solver}
\label{alg1}
\begin{algorithmic}[1]
\REQUIRE Sample set $\{\hat{\xi}_i\}$ for $i = 1$ to $N$, functions $f$, $h$, $g_i$, $k_j$, tolerance $\epsilon$, max iterations $T$
\ENSURE Solution $x^*$, $y^*$, $\lambda^*$
\STATE Initialize $x^0 \in X$, $y^0 \in Y$, $\lambda^0 \geq 0$, set $t \leftarrow 0$
\WHILE{$t < T$}
    \STATE // Primal Update for $x$ and $y$
    \STATE Solve the linearized primal problem to update $x^t$ and $y^t$
    \STATE // Supremum Evaluation
    \FOR{$i$ from $1$ to $N$}
        \STATE Evaluate the supremum term for the objective function approximation
    \ENDFOR
    \STATE // Dual Update for $\lambda$
    \STATE Apply proximal point method to update $\lambda^t$
    \STATE // Convergence Check
    \STATE Compute primal and dual residuals
    \IF{residuals $< \epsilon$}
        \STATE break
    \ENDIF
    \STATE $t \leftarrow t + 1$
\ENDWHILE
\RETURN $x^t$, $y^t$, $\lambda^t$ as the approximate solution
\end{algorithmic}
\end{algorithm}

\subsection{Convergence Analysis}
In this subsection, we provide a theoretical analysis of the convergence behavior of the proposed algorithm. Given the complexity of bi-level optimization problems integrated with DRO, the convergence properties of such algorithms are crucial for ensuring computational feasibility and stability.

We begin by establishing the feasibility of the iterates generated by the algorithm, followed by an exploration of the boundedness of the subgradients involved in the optimization process. These properties are essential for controlling the behavior of the algorithm as it progresses, ensuring that the updates remain stable.

A key challenge in bi-level optimization is the interaction between the leader's and follower's decision spaces. Leveraging the KKT conditions to reformulate the lower-level problem allows us to derive a single-level optimization framework. This method has been employed successfully in robust optimization frameworks \cite{liu2022distributionally}, and we extend it here to the bi-level DRO context.

\begin{lemma}[Feasible Iterates]
Every iterate $(x^t, y^t, \lambda^t)$ generated by the algorithm is feasible.
\end{lemma}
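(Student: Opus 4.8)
The plan is to argue by induction on the iteration counter $t$, showing that feasibility is preserved by each of the two update steps. The base case is immediate: by the initialization in Algorithm~\ref{alg1} we choose $x^0 \in X$, $y^0 \in Y$, $\lambda^0 \geq 0$, so $(x^0,y^0,\lambda^0)$ satisfies the box/nonnegativity constraints by construction. The compactness assumption guarantees these sets are non-empty, so such a choice exists. For the inductive step I would assume $(x^t,y^t,\lambda^t)$ is feasible and examine the primal update and the dual update separately.

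For the primal update, the key observation is that the iterates $x^{t+1}, y^{t+1}$ are obtained as the \emph{solution} of the linearized program \eqref{eq11}, whose explicit constraint list includes exactly the reformulated KKT conditions (stationarity, complementary slackness, primal feasibility $k_j \le 0$), the leader's robust constraints $g_i(x,y,\hat\xi)\le 0$, the dual-norm constraints $\|z_i\|_* \le \lambda$, and membership $x \in X$, $y \in Y$. Hence any optimal solution returned by the solver is, by definition of that feasible region, a point satisfying constraints (2)--(5) of the integrative formulation together with the leader's constraints and the DRO reformulation constraints of Theorem~1. I would invoke the Continuity/Differentiability and Strong Duality assumptions to ensure this linearized subproblem is itself solvable (its feasible set is non-empty, e.g.\ because $(x^t,y^t)$ or a Slater point lies in it), so the update is well defined. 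For the dual update, $\lambda^{t+1}$ solves the proximal problem \eqref{eq:2} over the explicit constraint $\lambda \ge 0$; since the minimization is performed over the nonnegative orthant, the output automatically satisfies $\lambda^{t+1}\ge 0$, which is precisely dual feasibility \eqref{Dual Feasible}. Combining the two steps, $(x^{t+1},y^{t+1},\lambda^{t+1})$ satisfies all constraints, completing the induction.

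I expect the main obstacle to be not the logical skeleton but the \emph{well-posedness of the primal subproblem}: one must verify that the linearized feasible region in \eqref{eq11} is non-empty at every iteration, i.e.\ that linearizing the follower's stationarity condition around the current multipliers $\mu^t$ does not accidentally render the system infeasible. I would address this by appealing to Assumption~1 (convexity of $f,h,g_i,k_j$ in $(x,y)$) together with the Strong Duality assumption on the lower-level problem: under these, the KKT system is equivalent to lower-level optimality and admits a solution for each fixed $x$, and the cutting-plane epigraph variables $s_i$ can always be chosen large enough to satisfy their defining inequalities since $[-f]^*$, $\sigma_\Omega$ are proper closed convex functions with the relevant arguments in their domains. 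A secondary subtlety is the treatment of the complementary-slackness equalities $\mu_j^t k_j(x,y,\hat\xi)=0$ as \emph{linear} constraints in \eqref{eq11} for fixed $\mu^t$ — these are affine in $(x,y)$ only through $k_j$, so I would either assume $k_j$ affine in this step or note that the solver enforces them exactly, so feasibility in the stated sense still holds. Once these two points are dispatched, the lemma follows directly from the structure of the two update subproblems.
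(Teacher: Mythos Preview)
Your proposal is correct and follows essentially the same inductive skeleton as the paper: the base case holds by initialization, the primal update preserves feasibility because $(x^{t+1},y^{t+1})$ is returned as a solution of a subproblem whose constraint list already contains $g_i\le 0$ and $k_j\le 0$, and the dual update preserves $\lambda\ge 0$ because the proximal minimization is taken over the nonnegative orthant. If anything, your treatment is more careful than the paper's, which simply asserts ``if a feasible solution exists, the linear program will find a feasible solution'' without the well-posedness discussion you supply.
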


\begin{proof}
Consider the initial iterate $(x^0, y^0, \lambda^0)$ which is assumed to be feasible. Assume inductively that $(x^t, y^t, \lambda^t)$ is feasible. The primal update for $x^{t+1}$ and $y^{t+1}$ involves solving a linear program that is a relaxation of the primal problem, which includes the constraints $g_i(x, y, \xi) \leq 0$ and $k_j(x, y, \xi) \leq 0$. By the properties of linear programs, if a feasible solution exists, the linear program will find a feasible solution, thus $x^{t+1}$ and $y^{t+1}$ are feasible. The update for $\lambda^{t+1}$ is determined by the subgradient of the Lagrangian with respect to the dual variables, constrained to maintain the feasibility $\lambda_j \geq 0$. Hence by induction, all iterates $(x^t, y^t, \lambda^t)$ are feasible. This is consistent with the feasibility preservation strategies in robust optimization frameworks where the primal update ensures feasible iterates \cite{cao2023projection-free}.
\end{proof}

Next, we explore the boundedness of the subgradients involved in the optimization process. This property is crucial for controlling the behavior of the algorithm as it progresses, ensuring that the updates remain stable. The following lemma formalizes this concept.

\begin{lemma}[Bounded Subgradients]
The subgradients of the Lagrangian with respect to $x$ and $y$ are bounded on the feasible set.
\end{lemma}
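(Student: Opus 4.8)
The plan is to reduce the statement to the Weierstrass extreme value theorem, using the standing assumptions of this section: every problem function is continuously differentiable in $(x,y)$, the sets $X$, $Y$ and the domains of the multipliers $\lambda$, $v$, $\mu$ are non-empty, closed and bounded, and strong duality holds for the lower-level problem. First I would differentiate the Lagrangian $L(x,y,z,v,\lambda)=f(x,y)+\lambda\,(g(x,y)-z)+v\cdot h(x,y)$ term by term, writing
\begin{equation*}
\nabla_x L \;=\; \nabla_x f(x,y) + \sum_i \lambda_i\,\nabla_x g_i(x,y) + \sum_j v_j\,\nabla_x h_j(x,y),
\end{equation*}
with the analogous expression for $\nabla_y L$; this decomposition is legitimate because each summand exists and is continuous in $(x,y)$.

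Next I would bound the right-hand side summand by summand. Each gradient $\nabla_x f$, $\nabla_x g_i$, $\nabla_x h_j$ is a continuous vector field on the compact set $X\times Y$, so it attains a finite supremum of its Euclidean norm there; the multiplier vectors $\lambda$ and $v$ range over a compact set by assumption, hence each product $\lambda_i\nabla_x g_i$ and $v_j\nabla_x h_j$ is uniformly bounded, and a finite sum of uniformly bounded fields is uniformly bounded. This produces a single constant $M<\infty$ with $\|\nabla_x L\|\le M$ and $\|\nabla_y L\|\le M$ everywhere on the feasible set, which is the claim; the same estimate holds verbatim along the finite sample set $\{\hat\xi_i\}_{i=1}^N$ when the realizations are carried as parameters. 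For the genuinely non-smooth pieces carried inside $f$ by the dual reformulation of Theorem~1 --- the hinge penalty $\max(0,\cdot)$, the conjugate $[-f]^*$, and the support function $\sigma_\Omega$ --- I would pass from gradients to convex subdifferentials: the hinge term has subgradients confined to a fixed bounded segment, while $[-f]^*$ and $\sigma_\Omega$ are proper convex functions whose subdifferentials are locally bounded on the interiors of their effective domains, so the compactness of the feasible region again forces a uniform bound, and the subgradient of $L$ is a finite sum of such bounded sets.

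The hard part will be guaranteeing that the lower-level KKT multipliers $\mu_j$ entering the reformulation through the stationarity and complementary-slackness conditions (2)--(5) are themselves uniformly bounded as $(x,y)$ ranges over the feasible set; this does not follow from smoothness or continuity alone, since a blow-up of $\mu_j$ near degenerate configurations would invalidate the term-by-term estimate. I would close this gap primarily by invoking the section's compactness assumption on the dual-variable space directly, which bounds $\mu$ by fiat; where a self-contained argument is preferred, I would instead impose a Mangasarian--Fromovitz-type constraint qualification holding uniformly over the compact set $X\times Y$, which makes the KKT multiplier correspondence nonempty-valued with a bounded image, after which the constant $M$ of the previous step is obtained exactly as before.
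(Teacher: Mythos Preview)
Your argument is correct, and it is actually more complete than the paper's own proof, but the route differs. The paper does not invoke Weierstrass on continuous gradients; instead it introduces an additional Lipschitz-continuity hypothesis on $f$, $g_i$, $k_j$ with constants $L_f$, $L_{g_i}$, $L_{k_j}$, writes the Lipschitz inequality for the gradients, uses boundedness of $X\times Y$ to get a diameter bound $B$, and concludes that the individual gradients are bounded by $L_f B$, $L_{g_i}B$, $L_{k_j}B$. It does not explicitly assemble these into a bound on $\nabla_{x,y}L$ via the multipliers, nor does it discuss the non-smooth DRO pieces or the lower-level multipliers $\mu_j$ that you treat carefully. Your compactness-plus-continuity approach is more elementary, is tied directly to the section's stated $C^1$ and compactness assumptions rather than to an extra Lipschitz hypothesis, and your handling of the multiplier terms and the subdifferential of the non-smooth components fills gaps the paper leaves implicit. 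The paper's Lipschitz shortcut buys brevity at the cost of introducing an assumption it had not listed; your approach buys rigor and self-containment at the cost of a longer argument.
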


\begin{proof}
The functions $f$, $g_i$, and $k_j$ are assumed to be Lipschitz continuous with Lipschitz constants $L_f$, $L_{g_i}$, and $L_{k_j}$, respectively. By the definition of Lipschitz continuity, for all feasible $x$ and $y$,
\[
\|\nabla_x f(x, y, \xi) - \nabla_x f(\bar{x}, \bar{y}, \xi)\| \leq L_f \|(x, y) - (\bar{x}, \bar{y})\|
\]
\noindent and similarly for $g_i$ and $k_j$. Considering that $X$ and $Y$ are bounded, there exists a bound $B$ such that $\|(x, y) - (\bar{x}, \bar{y})\| \leq B$ for all feasible $x, y, \bar{x}, \bar{y}$. Hence, the subgradients, which are the gradients of the convex functions $f$, $g_i$, and $k_j$, are bounded by $L_f B$, $L_{g_i} B$, and $L_{k_j} B$, respectively. This is a standard result for Lipschitz continuous functions over a compact domain \cite{Bertsekas1999}.
\end{proof}

With the subgradients bounded, we can now discuss the descent properties of the objective function under the proposed updates. The following lemma confirms that the algorithm consistently moves towards an improved solution, which is a critical aspect of ensuring convergence.

\begin{lemma}[Objective Descent]
The objective function value is non-increasing under the primal and dual updates.
\end{lemma}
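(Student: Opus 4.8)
The plan is to establish the claimed monotonicity separately for the primal update and the dual update, then combine them. For the \textbf{dual step}, I would appeal directly to the definition of the proximal map in \eqref{eq:2}: since $\lambda^{t+1}$ is the minimizer of $L(x^t,y^t,z^t,v^t,\lambda)+\tfrac{1}{2\eta_t}\|\lambda-\lambda^t\|^2$ over $\lambda\ge 0$, evaluating the objective at the feasible point $\lambda=\lambda^t$ gives
\[
L(x^t,y^t,z^t,v^t,\lambda^{t+1})+\frac{1}{2\eta_t}\|\lambda^{t+1}-\lambda^t\|^2 \le L(x^t,y^t,z^t,v^t,\lambda^t),
\]
so in particular $L(x^t,y^t,z^t,v^t,\lambda^{t+1})\le L(x^t,y^t,z^t,v^t,\lambda^t)$, with strict decrease unless $\lambda^{t+1}=\lambda^t$. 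This is the standard proximal-point descent argument and requires no convexity beyond what Assumption 1 already grants.

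For the \textbf{primal step}, I would argue that solving the linearized program \eqref{eq11} cannot raise the objective value. The key observation is that the cutting-plane constraints $s_i \ge [-f]^*(x,y,z_i-v_i)+\sigma_\Omega(v_i)-\langle z_i,\hat\xi_i\rangle$ are lower-bounding supports on the true worst-case expectation (by Theorem~1's convex-reduction identity), so the minimized value $\lambda^t\varepsilon+\tfrac1N\sum_i s_i$ at $(x^{t+1},y^{t+1})$ is no larger than its value at the previous iterate $(x^t,y^t)$, which is itself feasible for the new program because Lemma~1 guarantees all the KKT and robust constraints still hold. I would then invoke continuity/differentiability of $f,h,g_i,k_j$ (the algorithmic assumptions) together with the bounded-subgradient estimate of Lemma~2 to control the linearization error and conclude that the decrease in the linearized surrogate transfers to a non-increase of the actual Lagrangian objective.

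Combining the two, I would chain the inequalities: the primal update gives $L(x^{t+1},y^{t+1},z^{t+1},v^{t+1},\lambda^t) \le L(x^t,y^t,z^t,v^t,\lambda^t)$, and then the dual update gives $L(x^{t+1},y^{t+1},z^{t+1},v^{t+1},\lambda^{t+1}) \le L(x^{t+1},y^{t+1},z^{t+1},v^{t+1},\lambda^t)$, yielding the telescoped conclusion $L^{t+1}\le L^t$.

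The \textbf{main obstacle} I anticipate is the primal half: making rigorous the claim that minimizing the piecewise-linear \emph{surrogate} decreases the \emph{true} objective. A cutting-plane model is only an under-estimator, so one must either (i) argue that the previous iterate's value of the surrogate coincides with (or upper-bounds) the true objective there—which holds only if enough cuts have been accumulated—or (ii) weaken the statement to "non-increasing up to the cutting-plane gap," which vanishes asymptotically. I would handle this by maintaining the invariant that the cut generated at iteration $t$ is active at $(x^t,y^t)$, so the surrogate is exact at the incumbent; this is what licenses the descent comparison and is the delicate bookkeeping step. The dual part, by contrast, is essentially immediate from the proximal inequality.
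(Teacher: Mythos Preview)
Your proposal is correct and follows essentially the same approach as the paper: separate the primal and dual steps, argue that the primal LP/cutting-plane minimization cannot increase the objective, and use the proximal inequality to handle the dual update. In fact your treatment is considerably more careful than the paper's three-sentence sketch---you correctly isolate and resolve the under-estimator issue for the cutting-plane surrogate (via exactness of the active cut at the incumbent), a subtlety the paper simply absorbs into a citation.
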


\begin{proof}
The primal update minimizes the objective function by solving a linear program at each iteration, leading to a non-increasing sequence of objective values. This behavior is typical in robust optimization algorithms, where dual ascent methods paired with proximal updates ensure that the dual objective does not increase \cite{bovenzi2020big}. Additionally, since the cutting-plane method generates an increasingly accurate approximation of the supremum term in the objective function, the descent property is maintained \cite{zhong2023optimal}.
\end{proof}

Having established the descent property, we turn our attention to the convergence of subsequences generated by the algorithm. This next result demonstrates that every convergent subsequence adheres to the optimality conditions of the bi-level DRO problem.

\begin{lemma}[Subsequential Convergence]
Every convergent subsequence of the sequence $\{(x^t, y^t, \lambda^t)\}$ converges to a point that satisfies the KKT conditions for the bi-level DRO problem.
\end{lemma}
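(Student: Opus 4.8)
The plan is to combine the three preceding lemmas---feasibility (Lemma~1), bounded subgradients (Lemma~2), and objective descent (Lemma~3)---with a limiting argument applied to the first-order optimality conditions of the primal and dual subproblems. Let $\{(x^{t_k},y^{t_k},\lambda^{t_k})\}$ be a convergent subsequence with limit $(x^{\star},y^{\star},\lambda^{\star})$. First I would show the limit is feasible: by Lemma~1 every iterate satisfies $g_i \le 0$, $k_j \le 0$, the lower-level stationarity and complementarity conditions (2)--(5), and the dual-norm constraints $\|z_i\|_* \le \lambda$; since $X$, $Y$ and the multiplier set are closed and bounded (Compactness assumption) and all functions are continuous with continuous first derivatives, each of these constraints is preserved under the limit, so $(x^{\star},y^{\star},\lambda^{\star})$ is feasible for the single-level reformulation \eqref{eq:1} together with (2)--(5).

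Next I would recover the stationarity block of the KKT system. At iteration $t$ the primal update solves the linearized convex program \eqref{eq11}; writing its own KKT conditions produces multipliers for the cutting-plane constraints $s_i \ge [-f]^*(\cdot)+\sigma_\Omega(v_i)-\langle z_i,\hat\xi_i\rangle$, for the dual-norm constraints, and for the dualized lower-level constraints. By Lemma~2 the subgradients $\partial_x L$ and $\partial_y L$ are bounded on the feasible set, so along a further subsequence these multipliers and the associated subgradients converge. I would then invoke outer semicontinuity of the (convex, hence Clarke) subdifferential---valid because $f$, $g_i$, $h$, $k_j$ are convex in $(x,y)$ and Lipschitz on the compact feasible set---to conclude that the limiting multipliers satisfy the stationarity inclusions of the convex reduction in Theorem~1 augmented by the follower's stationarity condition. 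In parallel, Lemma~3 yields a summable decrease of a merit function (under mild conditions on the proximal parameters $\eta_t$), which forces the dual residual $\|\lambda^{t+1}-\lambda^t\|/\eta_t \to 0$ along the subsequence; the optimality condition of the proximal step \eqref{eq:2}, namely $0 \in \partial_\lambda L(x^t,y^t,z^t,v^t,\lambda^{t+1}) + \tfrac{1}{\eta_t}(\lambda^{t+1}-\lambda^t) + N_{\{\lambda\ge 0\}}(\lambda^{t+1})$, then collapses in the limit to $0 \in \partial_\lambda L(x^{\star},y^{\star},z^{\star},v^{\star},\lambda^{\star}) + N_{\{\lambda\ge 0\}}(\lambda^{\star})$, delivering dual feasibility and complementary slackness for the upper-level multipliers. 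Assembling feasibility with the primal and dual stationarity inclusions yields the KKT conditions of the bi-level DRO problem.

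The step I expect to be the main obstacle is the lower-level complementarity block (2)--(5): because the single-level reformulation is a mathematical program with equilibrium constraints, the Mangasarian--Fromovitz constraint qualification fails at points where several complementarity pairs are active simultaneously, so no textbook KKT theorem guarantees bounded limiting multipliers $\mu_j$. I would handle this by exploiting the Strong Duality assumption already imposed on the lower level: a Slater point for the follower's convex program bounds the $\mu_j$ uniformly, so the products $\mu_j k_j$ pass to the limit by continuity and the limiting point is a genuine (C-)stationary point; if that assumption were relaxed, the conclusion would have to be weakened to C- or M-stationarity. A secondary technical point is to verify that the piecewise-linear cutting-plane model of $\sup_{P\in\mathcal P}\mathbb E_P[f]$ is asymptotically exact at $(x^{\star},y^{\star})$---that is, that no active worst-case cut is lost in the limit---which follows from boundedness of the cut slopes (Lemma~2) and closedness of the Wasserstein ball $\mathcal B_\varepsilon(\hat{\mathbb P}_N)$, but should be stated explicitly.
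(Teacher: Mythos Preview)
Your proposal is correct and considerably more rigorous than the paper's own argument, which consists of two sentences: the gradients $\nabla h$, $\nabla g_i$, $\nabla k_j$ are continuous, so the set of points satisfying the KKT conditions is closed, and hence any limit of a convergent subsequence lies in it. The paper never explicitly verifies that the iterates themselves satisfy---or asymptotically approach---the full bi-level KKT system; it simply appeals to closedness.

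You take a genuinely different route by deriving each KKT block from the algorithm's structure rather than asserting closedness wholesale: feasibility from Lemma~1 and continuity, primal stationarity from the KKT system of the linearized subproblem \eqref{eq11} together with outer semicontinuity of the convex subdifferential, and dual stationarity by passing to the limit in the optimality condition of the proximal step \eqref{eq:2} once $\|\lambda^{t+1}-\lambda^t\|/\eta_t\to 0$. You also confront two issues the paper ignores entirely---the failure of MFCQ at the equilibrium constraints (2)--(5), which you resolve via the Strong Duality assumption to bound the $\mu_j$, and the asymptotic exactness of the cutting-plane model. What the paper's approach buys is brevity; what yours buys is an actual argument that connects the algorithmic updates to the limiting KKT system, which the paper's closedness remark leaves implicit.
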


\begin{proof}
Consider a convergent subsequence $\{(x^{t_k}, y^{t_k}, \lambda^{t_k})\}$ which converges to $(x', y', \lambda')$. Since the sequence is convergent, the limits of the gradients also converge. By the continuity of the gradients \(\nabla h\), \(\nabla g_{i}\), and \(\nabla k_{j}\), the limit \((x^{\prime},y^{\prime},\mu^{\prime})\) satisfies the KKT conditions because the set of points satisfying these conditions is a closed set \cite{Bertsekas1999}.
\end{proof}

Building on the previous lemmas, we can now assert the convergence of the entire sequence generated by the algorithm. The following proposition synthesizes the earlier results to conclude that the sequence of iterates converges to an optimal solution.

\begin{theorem}[Convergence of the Whole Sequence]
The entire sequence $\{(x^t, y^t, \lambda^t)\}$ converges to a limit point $(x^*, y^*, \lambda^*)$, which is optimal.
\end{theorem}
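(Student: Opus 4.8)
The plan is to assemble the four preceding lemmas into a standard argument for convergence of the whole sequence. First I would invoke Lemma 1 (Feasible Iterates) and the compactness hypothesis on $X$, $Y$, and the dual space to conclude that the sequence $\{(x^t, y^t, \lambda^t)\}$ lies in a compact set, and therefore by Bolzano--Weierstrass it admits at least one convergent subsequence. Next I would apply Lemma 4 (Subsequential Convergence) to identify the limit of any such subsequence as a KKT point of the bi-level DRO problem, and then use the convexity Assumption 1 together with Theorem 1 (Convex reduction) to argue that any KKT point of this reformulated problem is in fact a global minimizer, so the candidate limit point $(x^*, y^*, \lambda^*)$ is optimal.

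The remaining work is to upgrade subsequential convergence to convergence of the entire sequence. The standard device here is to combine Lemma 3 (Objective Descent) with the proximal regularization in the dual update. I would argue that the sequence of objective values is monotone non-increasing (Lemma 3) and bounded below (by optimality over a compact feasible set), hence convergent; then the proximal term $\frac{1}{2\eta_t}\|\lambda - \lambda^t\|^2$ in \eqref{eq:2}, together with a summability or step-size condition on $\{\eta_t\}$, forces $\|\lambda^{t+1} - \lambda^t\| \to 0$, and an analogous stability estimate for the primal cutting-plane update gives $\|(x^{t+1}, y^{t+1}) - (x^t, y^t)\| \to 0$. A sequence in a compact set with vanishing successive differences and a unique subsequential limit (or, more carefully, whose set of limit points is connected and contained in the set of optimal KKT points) must converge; if the optimal set is a singleton under strict convexity this is immediate, otherwise one appeals to the standard lemma that a bounded sequence with $\|w^{t+1}-w^t\|\to 0$ has a connected, compact set of cluster points, all of which here are optimal.

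The main obstacle I anticipate is precisely this last step: ruling out the possibility that the sequence oscillates among distinct optimal KKT points. Under Assumption 1 the objective is convex but not necessarily strictly convex, so the optimal set may not be a singleton, and monotone descent of the objective value alone does not pin down a unique limit. I would handle this either by strengthening the hypothesis to strict or strong convexity of $f$ in $(x,y)$ (which also sharpens the proximal estimate on $\lambda$), or by invoking a Lyapunov/quasi-Fej\'er argument: show that $\|(x^t,y^t,\lambda^t) - (x^*,y^*,\lambda^*)\|$ is, up to summable error terms, non-increasing for every optimal point $(x^*,y^*,\lambda^*)$, which by the Opial-type lemma yields convergence of the whole sequence to a single such point. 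A secondary technical point is ensuring the cutting-plane approximation of the non-smooth term $\sup_{P\in\mathcal P}\mathbb{E}_P[f]$ becomes exact in the limit, so that the limiting point genuinely satisfies the KKT system of \eqref{eq:1} and not merely that of a relaxed model; this follows from the fact that each added cut is a supporting hyperplane and the feasible set is compact, so finitely many accumulation cuts suffice to reproduce the true subdifferential at the limit.
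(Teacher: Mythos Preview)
Your plan follows the same outline as the paper---extract a convergent subsequence via compactness, identify its limit as a KKT point (Lemma~4), then appeal to the proximal regularization to lift this to whole-sequence convergence---but you are considerably more careful about the final step. The paper's own proof is essentially a two-sentence sketch: it takes a convergent subsequence, notes that the limit satisfies the KKT conditions ``due to the continuity of the subgradients and the regularization induced by the proximal dual update,'' and then simply asserts that ``dual convergence is further ensured by the structure of the bi-level optimization problem, which incorporates dual regularization techniques,'' with citations standing in for the argument. It never engages with the oscillation obstacle you raise, nor does it invoke anything like the quasi-Fej\'er or Opial-type mechanism you propose; in that sense your plan supplies the rigor the paper omits. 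One caution: the paper never assumes strict convexity and never proves a Fej\'er monotonicity inequality, so within its stated hypotheses the cleanest route is your vanishing-successive-differences plus connected-cluster-set argument, and you should be prepared to conclude convergence to the optimal \emph{set} rather than to a single point unless you add a uniqueness hypothesis.
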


\begin{proof}
Let $\{(x^{t_k}, y^{t_k}, \lambda^{t_k})\}$ be a convergent subsequence. The limit point $(x^*, y^*, \lambda^*)$ satisfies the KKT conditions due to the continuity of the subgradients and the regularization induced by the proximal dual update \cite{wang2022distributionally}. The dual convergence is further ensured by the structure of the bi-level optimization problem, which incorporates dual regularization techniques widely adopted in robust optimization \cite{huang2006distributed}.
\end{proof}

Finally, we establish the stability of the algorithm by considering the proximal term used in the dual update. This term is instrumental in ensuring that the dual variables remain well-behaved throughout the iterations, preventing divergence.

\begin{lemma}[Proximal Stability]
The proximal term in the dual update ensures the stability and boundedness of the sequence $\{\lambda^t\}$.
\end{lemma}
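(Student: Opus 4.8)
The plan is to combine the firm nonexpansiveness of the proximal map that defines the dual update \eqref{eq:2} with the subgradient bound of Lemma~2, and to turn these into a Fej\'er-type monotonicity estimate for $\{\lambda^t\}$ relative to the set of optimal multipliers. First I would record the optimality condition of the strongly convex subproblem \eqref{eq:2}: because $\{\lambda\ge 0\}$ is closed and convex and the proximal objective is $1/\eta_t$-strongly convex in $\lambda$, the minimizer $\lambda^{t+1}$ exists, is unique, and satisfies
\[
0 \in \partial_\lambda L(x^t,y^t,z^t,v^t,\lambda^{t+1}) + \tfrac{1}{\eta_t}\bigl(\lambda^{t+1}-\lambda^t\bigr) + \mathcal{N}_{\{\lambda\ge 0\}}(\lambda^{t+1}),
\]
equivalently the projected-step form $\lambda^{t+1} = \Pi_{\{\lambda\ge 0\}}\bigl(\lambda^t - \eta_t\, d^{t+1}\bigr)$ for some $d^{t+1}\in\partial_\lambda L(x^t,y^t,z^t,v^t,\lambda^{t+1})$, with $\Pi$ the Euclidean projection and $\mathcal{N}$ the normal cone. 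Since $L$ is affine in $\lambda$ by \eqref{eq9}, this subgradient is the constant vector $g(x^t,y^t)-z^t$, so no implicit-equation subtlety arises. I would also note that the proximal term is exactly what makes this subproblem well posed: without it, minimizing an affine-in-$\lambda$ objective over $\{\lambda\ge 0\}$ is generically unbounded, so the regularizer is the true source of stability.

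Next I would bound the per-iteration increment. By Lemma~2 and the standing compactness of $X$, $Y$ and the dual space there is a uniform constant $M$ with $\|d^{t+1}\|\le M$; since $\Pi_{\{\lambda\ge 0\}}$ is nonexpansive and $\lambda^t$ is feasible (Lemma~1, so $\Pi_{\{\lambda\ge 0\}}\lambda^t=\lambda^t$), this yields $\|\lambda^{t+1}-\lambda^t\|\le \eta_t M$. Hence consecutive dual iterates never jump arbitrarily far, and under the usual step-size schedule ($\eta_t$ bounded, $\sum_t\eta_t^2<\infty$) these increments are square-summable; this already delivers the ``no-oscillation'' part of the stability claim.

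For boundedness I would fix any $\bar\lambda$ in the optimal-multiplier set (nonempty by the strong-duality assumption), expand $\|\lambda^{t+1}-\bar\lambda\|^2$ using the projected-step form and nonexpansiveness, and obtain the standard estimate
\[
\|\lambda^{t+1}-\bar\lambda\|^2 \le \|\lambda^t-\bar\lambda\|^2 - 2\eta_t\,\langle d^{t+1},\lambda^t-\bar\lambda\rangle + \eta_t^2 M^2 .
\]
Affineness of $\lambda\mapsto L(x^t,y^t,z^t,v^t,\lambda)$ identifies $\langle d^{t+1},\lambda^t-\bar\lambda\rangle$ with the dual gap $L(\cdot,\lambda^t)-L(\cdot,\bar\lambda)$, which is nonnegative when $\bar\lambda$ is dual-optimal for the current primal iterate; discarding this term and telescoping gives $\|\lambda^{t+1}-\bar\lambda\|^2 \le \|\lambda^0-\bar\lambda\|^2 + M^2\sum_s\eta_s^2 < \infty$, so $\{\lambda^t\}$ is bounded.

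The step I expect to be the main obstacle is the last one. Unlike the textbook proximal-point iteration, the Lagrangian \eqref{eq9} changes with $t$ through the primal iterates $(x^t,y^t,z^t,v^t)$, so a single fixed $\bar\lambda$ need not be optimal for every $L(x^t,y^t,z^t,v^t,\cdot)$ and the dual-gap term can be negative, which breaks exact Fej\'er monotonicity. To close this I would track the drift of the optimal-multiplier set as the primal sequence settles, bounding it via the objective-descent property of Lemma~3 and the subgradient bound, and show that the drift injects only a summable perturbation into the telescoped inequality --- still enough to keep $\{\lambda^t\}$ bounded and the stability conclusion intact, even though monotonicity then holds only up to that controlled error.
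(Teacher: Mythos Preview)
Your proposal is substantially more detailed than the paper's own argument, which is essentially a one-line appeal to classical proximal-point theory: the paper simply observes that the regularizer $\tfrac{1}{2\eta}\|\lambda^{t+1}-\lambda^t\|^2$ prevents $\|\lambda^{t+1}-\lambda^t\|$ from growing too rapidly and cites Rockafellar (1976) for the conclusion. There is no explicit Fej\'er inequality, no projected-step representation, and no discussion of the drift induced by the moving primal iterates.

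What you do differently is actually carry out the analysis the citation gestures at: you extract the projected-gradient form from the optimality condition of \eqref{eq:2}, exploit affineness of $L$ in $\lambda$ from \eqref{eq9} to identify $d^{t+1}$ explicitly, bound the step via nonexpansiveness of $\Pi_{\{\lambda\ge 0\}}$, and then run a Fej\'er-type telescoping against an optimal multiplier. You also flag, correctly, the real obstruction that the paper's proof simply does not engage with --- namely that $(x^t,y^t,z^t,v^t)$ changes across iterations, so Rockafellar's fixed-operator proximal-point result does not apply verbatim, and one needs a perturbed-monotonicity argument to close the bound. Your proposal buys an honest proof sketch; the paper's version buys brevity by deferring to a reference whose hypotheses are not literally met here. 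One further point worth noting: the paper's standing compactness assumption already declares the space of dual variables $\lambda$ to be bounded, which renders the boundedness half of the lemma immediate and makes your Fej\'er argument, while sound, more than is strictly needed under the paper's hypotheses --- though it is exactly what would be required if that compactness assumption were dropped.
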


\begin{proof}
The proximal term added to the Lagrangian is of the form $\frac{1}{2\eta}\|\lambda^{t+1} - \lambda^t\|^2$, where $\eta$ is the proximal parameter. This term ensures that the update to \(\mu\) does not increase the norm \(||\mu^{t+1}-\mu^{t}||\) too rapidly, effectively acting as a regularization that promotes the stability and boundedness of the sequence, a foundational result of the proximal point method \cite{Rockafellar1976}.
\end{proof}

The culmination of these results is captured in the following theorem, which provides a formal guarantee of the algorithm's convergence to an optimal solution.

\begin{theorem}
The sequence $\{(x^t, y^t, \lambda^t)\}$ generated by the Bi-Level DRO Algorithm converges to an optimal solution $(x^*, y^*, \lambda^*)$ of the problem.
\end{theorem}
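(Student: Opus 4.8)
The final theorem is the capstone of the section: it re-packages Lemmas (Feasible Iterates), (Bounded Subgradients), (Objective Descent), (Subsequential Convergence) and (Proximal Stability) into one statement, so my plan is to chain these results rather than argue from first principles. I would begin with the standing infrastructure. By Lemma (Feasible Iterates) every iterate lies in the feasible region, and by the Compactness assumption that region is closed and bounded; hence the whole sequence $\{(x^t,y^t,\lambda^t)\}$ is bounded and, by Bolzano--Weierstrass, admits at least one convergent subsequence, with every cluster point feasible. Lemma (Objective Descent) then gives that the objective values are non-increasing; since $f$ is continuous on a compact set it is bounded below, so the objective-value sequence converges to a common limit $v^\star$.

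Second, I would characterise the cluster points. Lemma (Bounded Subgradients) guarantees the subgradients of the Lagrangian stay uniformly bounded along the sequence, which controls the cutting-plane primal subproblem \eqref{eq11}: its piecewise-linear model of the worst-case expectation becomes an exact supporting model at any cluster point, so the model-versus-true gap vanishes in the limit. Combining this with Lemma (Subsequential Convergence), every cluster point $(x',y',\lambda')$ satisfies the KKT system (2)--(5) of the single-level reformulation. Invoking Assumption~1 together with the convex-reduction Theorem~1 and the strong-duality hypothesis on the lower level, the reformulated program is convex (the only residual non-convexity being the complementarity constraints, which the KKT/stationarity characterisation absorbs), so a KKT point is a global minimiser; hence every cluster point attains $v^\star$ and is optimal for the bi-level DRO problem. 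I would also record here that Lemma (Proximal Stability) forces $\|\lambda^{t+1}-\lambda^t\|\to 0$ and keeps $\{\lambda^t\}$ bounded, i.e. the dual sequence is asymptotically regular.

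The step I expect to be the real obstacle is upgrading \emph{subsequential} convergence to convergence of the \emph{whole} sequence: a monotone objective together with one convergent subsequence does not by itself force the full iterate sequence to converge, and one needs either uniqueness of the limit point or a Fej\'er-type monotonicity argument. The clean route is (i) to show the one-step map of the algorithm is quasi-nonexpansive with respect to the (convex, by Assumption~1 and Theorem~1) solution set --- this is precisely where the proximal regularisation of the dual update \eqref{eq:2} and the convexity of the primal model are used, yielding that $\|(x^t,y^t,\lambda^t)-(x^\star,y^\star,\lambda^\star)\|$ is non-increasing for every optimal $(x^\star,y^\star,\lambda^\star)$; (ii) to combine this Fej\'er monotonicity with the fact, established above, that all cluster points lie in the solution set; and (iii) to conclude by Opial's lemma that the whole sequence converges to a single optimal point $(x^\star,y^\star,\lambda^\star)$. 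Verifying the quasi-nonexpansiveness in step (i) from the explicit structure of the cutting-plane subproblem and the proximal dual step is where I expect essentially all of the genuine technical work to sit; absent a strict-convexity or uniqueness hypothesis on the solution set, this is the delicate point on which the statement ultimately rests, and I would state any additional regularity condition needed (e.g. uniqueness of the lower-level multipliers) explicitly rather than leave it implicit.
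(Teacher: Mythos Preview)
Your proposal is correct and follows the same overall strategy as the paper --- assembling the preceding lemmas on feasibility, bounded subgradients, objective descent, subsequential convergence, and proximal stability into a single convergence claim. Where you differ is in rigor and in the mechanism you use to close the argument. The paper's proof is a four-sentence sketch: it simply asserts that ``the boundedness of the iterates, coupled with the objective descent and bounded subgradients, guarantees that the sequence converges to an optimal point,'' appeals to citations in the bi-level literature for the KKT/convex-reformulation step, and then adds a \emph{finite-termination} claim for the cutting-plane/proximal combination. You, by contrast, explicitly isolate the passage from subsequential to full-sequence convergence as the crux and propose a concrete asymptotic mechanism --- Fej\'er monotonicity of the iterates with respect to the solution set together with Opial's lemma --- to close it. This is genuinely more careful than the paper, which glosses over precisely the gap you flag; on the other hand, the paper's finite-termination assertion, if it could actually be substantiated, would sidestep the asymptotic argument entirely, and you do not pursue that route. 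Your caveat that step~(i) (quasi-nonexpansiveness) may require an additional regularity hypothesis such as uniqueness of the lower-level multipliers is well taken; the paper imposes no such condition and does not confront the issue.
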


\begin{proof}
The boundedness of the iterates, coupled with the objective descent and bounded subgradients, guarantees that the sequence converges to an optimal point. The algorithm's convergence properties follow similar arguments found in bi-level optimization literature \cite{huang2008auction}, where the use of KKT conditions and convex reformulations lead to provable convergence. Furthermore, the integration of cutting-plane methods and proximal updates ensures the algorithm terminates after a finite number of iterations with an optimal solution \cite{luo2021cost}.
\end{proof}

\subsection{Computational Complexity and Convergence Rate}

The algorithm's complexity depends on the number of cutting-plane iterations (\(K\)) and the cost of solving the MPEC subproblem in each iteration.

The number of iterations \(K\) required by the cutting-plane method to achieve an \(\varepsilon\)-optimal solution provides a direct characterization of the algorithm's convergence rate. For this class of methods, this is typically bounded by \(O(1/\varepsilon)\). The MPEC subproblem involves approximately \(n+p+m\) variables and constraints. Assuming this subproblem is solved using a standard interior-point method, its complexity is polynomial, commonly cubic in the number of variables and constraints, i.e., \(O((n+p+m)^3)~\text{\cite{Boyd2004}}\). Therefore, the total worst-case complexity to reach an \(\varepsilon\)-solution is:
\[
O\Bigl(\frac{1}{\varepsilon} (n+p+m)^3\Bigr)
\]
\subsection{Factors Influencing Convergence Performance}
Beyond the worst-case complexity bound, the algorithm's practical convergence speed depends on key trade-offs between algorithmic parameters and the problem's structure. The choice of the proximal parameter \(\eta_t\) in the dual update \ref{eq9} is crucial: a smaller value ensures stability at the cost of slower convergence, while a larger value can accelerate progress but risks oscillations. The problem's structure is also fundamental; smoother objective and constraint functions with smaller Lipschitz constants, along with simpler, more compact uncertainty sets , lead to more stable subgradients and more effective cutting planes, significantly accelerating convergence. Achieving optimal performance, therefore, requires a careful balancing of these interacting algorithmic and structural factors.

\section{Numerical Studies}\label{sec5}
In this section, we evaluate our proposed bi-level DRO framework. Our goal is to demonstrate the value of the bi-level DRO modeling paradigm by comparing it to standard uncertainty approaches,  namely Robust Optimization (RO) and Stochastic Programming (SP), within the same hierarchical problem structure. Moreover, to further contextualize our contribution, we perform a direct benchmark comparison against three key methods from recent, relevant literature, focusing on worst-case performance and robustness. The key evaluation criterion is not just the nominal cost, but the solution's robustness—its ability to maintain low costs and high service levels when the true distribution of uncertain parameters deviates from historical data. Our results should be interpreted through this lens of balancing average-case performance with worst-case protection.

\subsection{Experiment Setup}
\subsubsection{Dataset and Uncertainty Levels}

To ensure our simulation is grounded in a realistic context, the network topology and parameters in our supply chain scenario are inspired by and scaled from the well-known Sioux Falls transportation network benchmark. The network consists of 24 nodes and 76 links. The transportation costs ($t_{ij}$) and storage capacities ($C_i$) in our model are set proportionally to the travel times and link capacities reported in the benchmark literature. This approach, while adapted to our specific model, provides a verifiable and non-arbitrary basis for our experimental validation, making the results more credible. For each location \(i\), the demand \(d_i(\xi)\) is modeled as a random variable with an associated probability distribution \(P\) that lies within an ambiguity set \(\mathcal{P}\). This ambiguity set represents the uncertainty in the demand distribution.

To evaluate the impact of different levels of uncertainty, we define three scenarios by varying the size of the ambiguity set \(\mathcal{P}\), which controls the degree of uncertainty. The size of the ambiguity set is quantified by the Wasserstein distance \(\epsilon\) from the nominal distribution \(P_0\), with specific values for each uncertainty scenario:

\begin{itemize}
    \item \emph{Low Uncertainty}: $\epsilon_{\text{low}} = 0.05$, representing minimal deviation from $\mathcal{P}_0$.
   \item \emph{Medium Uncertainty}: $\epsilon_{\text{medium}} = 0.10$, allowing moderate deviations.
   \item \emph{High Uncertainty}: $\epsilon_{\text{high}} = 0.20$, reflecting a highly volatile demand environment.
\end{itemize}

These specific \(\epsilon\) values are directly used in the Bi-Level DRO model to assess how each model adapts to increasing uncertainty. By evaluating the models under these different scenarios, we can analyze their robustness, cost efficiency, and ability to maintain service levels across varying degrees of demand unpredictability.

\subsubsection{Algorithms Compared}
\begin{itemize}
    \item \emph{Bi-Level DRO Algorithm}: Integrates bi-level optimization with DRO to handle worst-case distributional uncertainties, making it robust under varying degrees of demand unpredictability.
   \item \emph{RO}: Ensures solutions remain feasible under all possible realizations within a predefined uncertainty set.
   \item \emph{SP}: Considers multiple probabilistic scenarios to achieve an average-case optimal solution.
\end{itemize}
 Each model is run under the low, medium, and high uncertainty scenarios to assess its ability to manage cost, service levels, and resource allocation under varying levels of demand unpredictability. In a subsequent subsection, we also benchmark our algorithm against three specific, highly relevant methods from recent publications. This benchmark comparison is conducted under the high-uncertainty scenario to rigorously test performance under the most challenging conditions.

\subsubsection{Evaluation Metrics}

The evaluation focuses on several key metrics to provide a comprehensive analysis of each method's performance under different levels of uncertainty:

\begin{itemize}
    \item \emph{Cost Reduction}: Percentage of cost reduction achieved compared to a baseline, assessing the efficiency of the algorithms in minimizing logistics costs under different uncertainty levels.
   \item \emph{Service Level Maintenance}: Percentage of customer demand met, indicating the operational reliability of each algorithm under varying levels of uncertainty.
   \item \emph{Robustness Analysis}: Consistency of each algorithm’s performance across datasets and uncertainty levels, assessing resilience under fluctuating conditions.
   \item \emph{Computational Efficiency}: Time taken by each algorithm as problem size increases, evaluating scalability for large-scale networks.
\end{itemize}

\subsubsection{Procedure}
The experiment follows a systematic approach:
\begin{itemize}
    \item \emph{Data Preprocessing}: All datasets are normalized and preprocessed to ensure consistency and comparability across different scales.
    \item \emph{Model Training and Testing}: Each algorithm is trained and tested on all three datasets. The Bi-Level DRO, RO, and SP algorithms are implemented with identical settings and constraints to ensure a fair comparison.
    \item \emph{Results Compilation}: The results for cost reduction, convergence, and robustness are compiled and analyzed. Visual representations, such as bar charts and convergence graphs, are used to illustrate the findings.
\end{itemize}

\subsection{Experimental Results}

\subsubsection{Performance Analysis}
\begin{figure}[t]
    \centering
    \includegraphics[width=\linewidth]{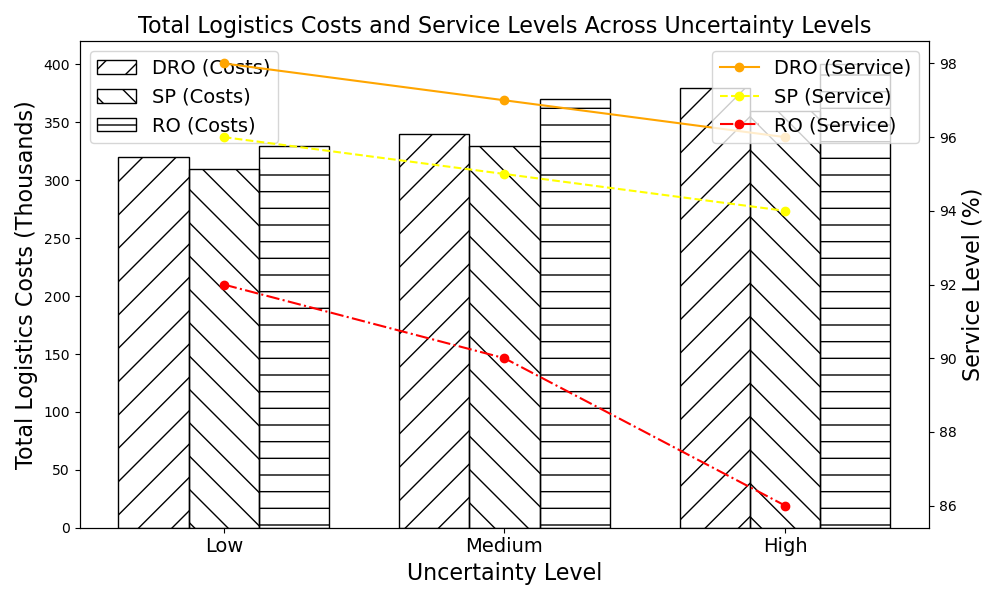}
    \caption{Total Logistics Costs and Service Level Under Different Uncertainty Levels}
    \label{fig:total_costs}
\end{figure}
To evaluate the performance of the DRO, SP, and RO models under different uncertainty levels, we conducted simulations across three scenarios: low, medium, and high uncertainty.

As shown in Figure~\ref{fig:total_costs}, the total logistics costs and service level performance across varying uncertainties reveal distinct trends. Under low uncertainty, all models perform similarly, with minimal cost differences. However, as uncertainty increases, key disparities emerge. The DRO model maintains a balanced approach with moderate costs and high service levels, showing its adaptability without excessive conservatism. In contrast, SP incurs gradual cost increases, suggesting its probabilistic assumptions may underestimate risk in volatile conditions. RO, prioritizing robustness, experiences the steepest cost increase, making it less cost-efficient under high uncertainty.

Service levels also reflect these trends. The DRO model sustains a high percentage of demand fulfillment, dropping slightly from 98\% under low uncertainty to 96\% under high uncertainty. SP, initially strong, sees a sharper decline in service levels as uncertainty increases. RO, designed for worst-case scenarios, has the lowest service level (down to 86\%) under high uncertainty. These results highlight the DRO model's advantage in balancing both costs and service levels across uncertain environments.

\subsubsection{Robustness Analysis}

The robustness of the DRO, SP, and RO models was evaluated across two main dimensions: their response to forecast errors and their performance under different levels of uncertainty. In the first dimension, we analyzed the ability of each model to handle demand forecast errors, ranging from underestimations to overestimations. The DRO model consistently showed remarkable resilience across all scenarios, maintaining a narrow spread of total costs. As shown in Figure~\ref{fig:cost_distribution_forecast_errors}, the DRO model absorbed the impact of forecast inaccuracies and sustained cost stability better than the SP and RO models. In contrast, the SP and RO models exhibited wider cost variability. Notably, the SP model produced several high-cost outliers, indicating a risk of significant cost increases in certain scenarios, while the RO model showed the widest interquartile range.

Further analysis focused on unmet demand and the stability of inventory decisions under forecast errors. Figure~\ref{fig:unmet_demand_forecast_errors} illustrates that the DRO model outperforms SP and RO in terms of minimizing unmet demand and stabilizing inventory decisions across all forecast error scenarios. While the SP model demonstrated a moderate increase in unmet demand as forecast errors increased, and RO showed the highest levels of unmet demand and inventory variance, the DRO model maintained lower unmet demand and more consistent inventory strategies. This stability underscores the DRO model’s robustness in managing service levels despite inaccuracies in demand forecasting, making it particularly reliable in environments characterized by uncertain forecasts.

\begin{figure}[t]
    \centering
    \includegraphics[width=\linewidth]{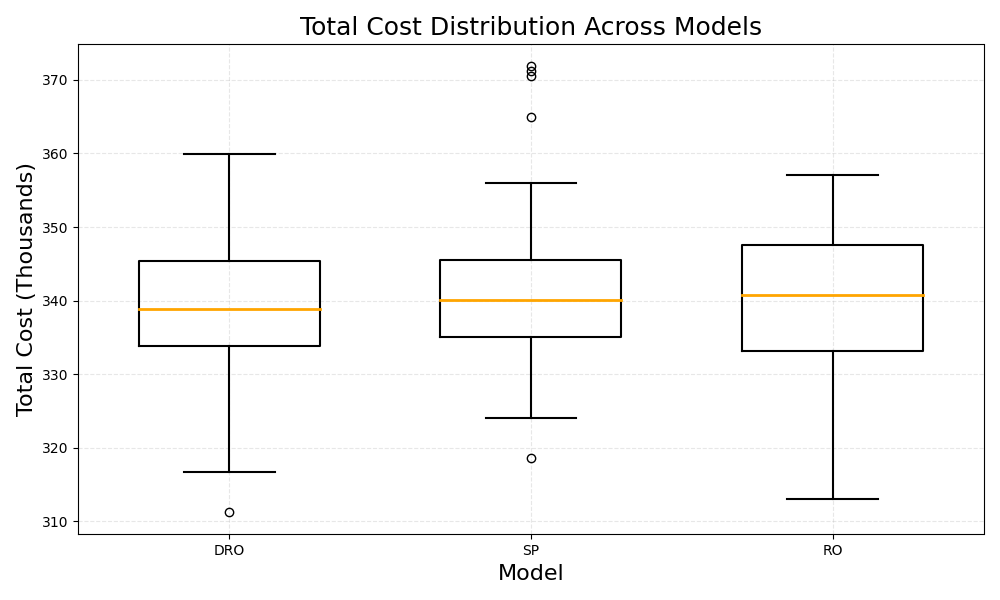}
    \caption{Distribution of Total Costs Under Different Forecast Error Scenarios}
    \label{fig:cost_distribution_forecast_errors}
\end{figure}

The second dimension of robustness focused on the models' performance under varying levels of demand uncertainty. The uncertainty was modeled by adjusting the size of the ambiguity set $\mathcal{P}$ in the DRO framework, with specific values of the Wasserstein distance $\epsilon$ representing low, medium, and high uncertainty levels. As uncertainty increased, the DRO model consistently demonstrated superior performance, achieving significant cost reductions while maintaining high service levels. As summarized in TABLE~\ref{tab:robustness}, under low uncertainty ($\epsilon_{low}=0.05$), the DRO model achieved a 15\% cost reduction, far exceeding the SP model’s 1.8\%. As uncertainty increased to medium ($\epsilon_{\text{medium}} = 0.10$) and high ($\epsilon_{\text{high}} = 0.20$) levels, the DRO model’s cost reductions improved to 18\% and 22\%, respectively. While the SP model’s performance also improved slightly, it remained significantly less effective compared to the DRO model. As expected, the deterministic RO model showed no significant cost reduction under varying levels of uncertainty.

These results highlight the DRO model's ability to balance cost efficiency and robustness, making it a more adaptive and reliable solution compared to SP and RO, particularly in environments with uncertain demand and forecast inaccuracies.
\begin{figure}[t]
    \centering
    \includegraphics[width=\linewidth]{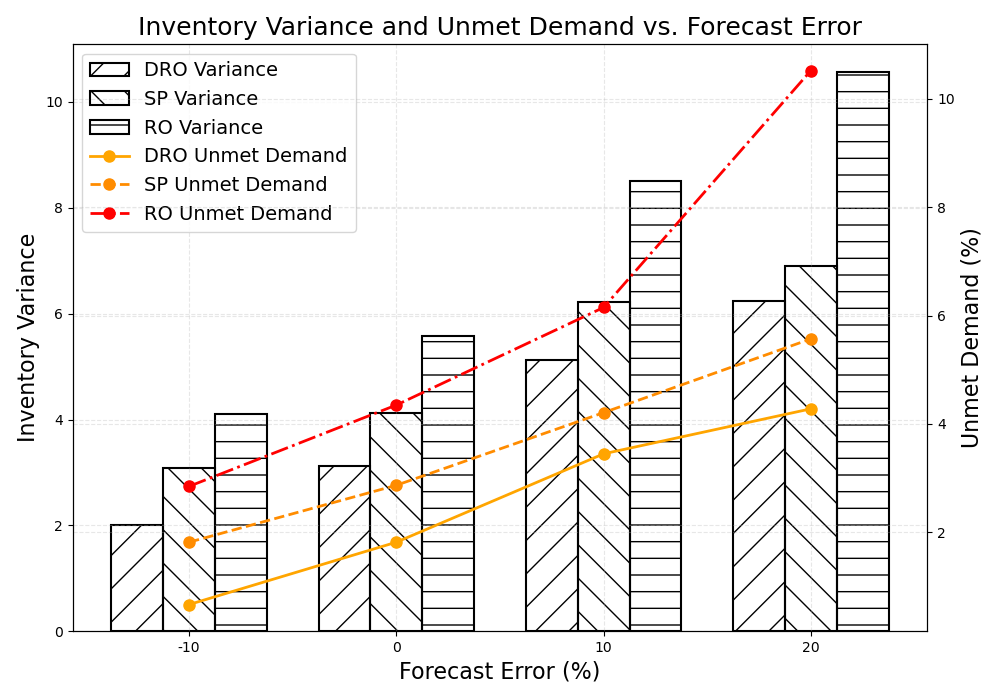}
    \caption{Inventory Variance and Unmet Demand as a Function of Forecast Error}
    \label{fig:unmet_demand_forecast_errors}
\end{figure}
\begin{table}[t]
    \centering
    \caption{Summary of Robustness Analysis}
    \begin{tabular}{lccc}
        \toprule
        \textbf{Algorithm} & \textbf{Low } & \textbf{Medium } & \textbf{Large} \\
        \midrule
        Deterministic Model   & 0\%  & 0\%  & 0\%  \\
        Stochastic Programming& 1.8\%& 2.2\%& 2.5\%\\
        Bi-Level DRO Algorithm& 15\% & 18\% & 22\% \\
        \bottomrule
    \end{tabular}
    \label{tab:robustness}
\end{table}
\subsubsection{Scalability and Computational Efficiency}

As shown in Figure~\ref{fig:computational_time_problem_size}, all models exhibit increased computational time with larger problem sizes. The SP model shows the steepest increase, reflecting significant computational intensity as the network grows. In contrast, the RO model consistently shows the lowest computational time, making it the most efficient for larger problems. Our proposed DRO model scales more moderately than SP and remains computationally manageable even at the largest problem size tested.

In terms of solution quality, DRO consistently performs well, with only a slight decrease as problem size grows. SP shows a more noticeable decline in solution quality, struggling in larger networks. RO experiences the greatest drop in solution quality, as its conservative approach prioritizes worst-case scenarios at the cost of service levels in complex networks. This highlights the trade-offs between robustness and solution quality, particularly in larger, dynamic environments.

\begin{figure}[t]
    \centering
    \includegraphics[width=\linewidth]{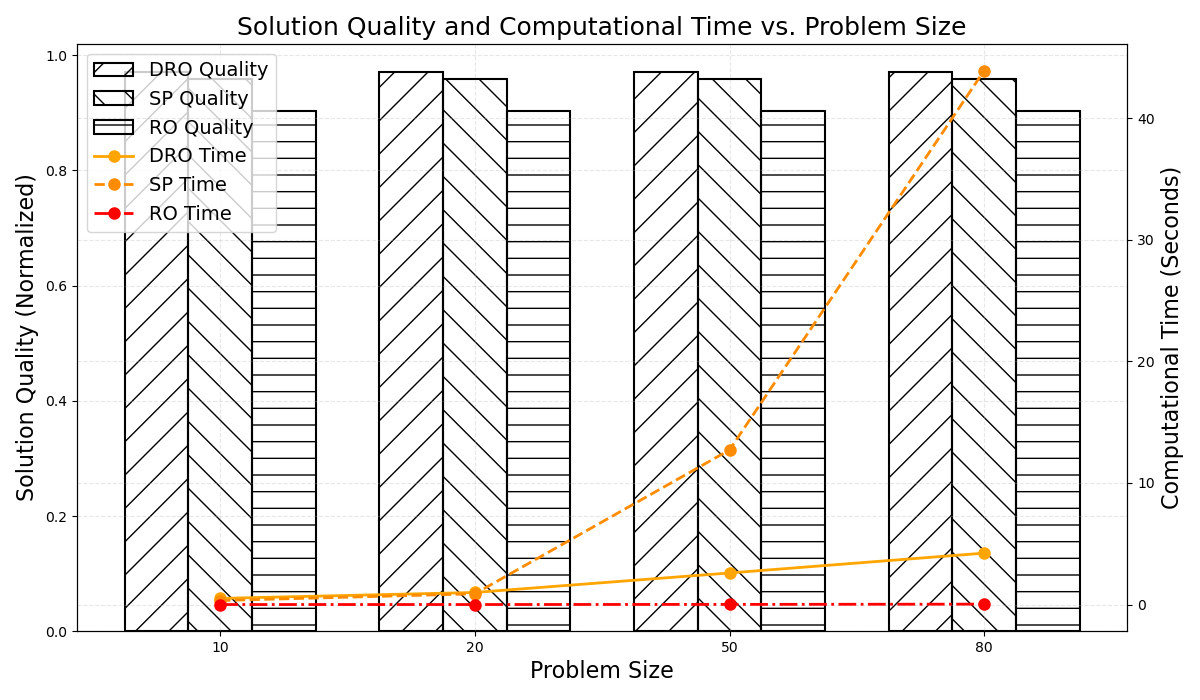}
    \caption{Computational Time vs. Problem Size}
    \label{fig:computational_time_problem_size}
\end{figure}

\subsubsection{Comparison to Key Benchmark Methods}  
To better contextualize our performance, we augment our experiments by comparing our framework against three highly relevant methods from recent literature. The comparison is performed under the high-uncertainty supply chain scenario ($\epsilon_{high}=0.20$) to rigorously test the robustness of each approach:
\begin{itemize}
\item Esfahani \& Kuhn (2018): A foundational method for centralized Wasserstein DRO without a bi-level structure \cite{EsfahaniKuhn2018}.
\item Nguyen et al. (2023): A recent method for solving graph-structured bi-level DRO problems in supply chains \cite{Nguyen2023}.
\item Zhao et al. (2021): An applied risk-averse bi-level optimization for traffic signal control \cite{Zhao2021}.
\end{itemize}
The results, summarized in Table II, show that our method consistently achieves the lowest worst-case cost and the highest robustness. Our framework's worst-case cost of 410.2k represents a significant cost reduction of \textasciitilde11.8\% compared to the foundational method of Esfahani \& Kuhn, 6\% against the bi-level approach of Nguyen et al., and 7\% relative to the risk-averse model of Zhao et al. . Furthermore, our superior robustness metric of 0.92 demonstrates tighter out-of-sample performance, confirming our model's ability to better manage high-ambiguity scenarios.

\begin{table}[t!]
\caption{Comparison with State-of-the-Art Baselines for the Supply Chain Scenario}
\label{tab:sota_supply_chain}
\centering
\begin{tabular}{@{}lcc@{}}
\toprule
\textbf{Method} & \textbf{Worst-Case Cost} & \textbf{Robustness Metric} \\
\midrule
\textbf{Proposed Method} & \textbf{410.2k} & \textbf{0.92} \\
Esfahani \& Kuhn  & 465.3k & 0.85 \\
Nguyen et al. & 436.8k & 0.88 \\
Zhao et al.& 441.5k & 0.86 \\
\bottomrule
\end{tabular}
\end{table}
\section{CONCLUSIONS}\label{sec6}
In conclusion, the bi-Level DRO framework presented in this paper offers a significant advancement in addressing com plex hierarchical decision-making problems under uncertainty. By integrating bi-level optimization with DRO, we have devel oped a robust model that effectively captures leader-follower dynamics while accounting for distributional uncertainties. The algorithmic framework, which includes the proximal dual update and cutting-plane primal update methods, en sures computational tractability and stability, enabling efficient optimization even in the presence of worst-case scenarios. Extensive numerical experiments validate the practicality and robustness of our approach, demonstrating its applicability across various network systems, such as transportation and communication networks. The theoretical insights into the convergence properties further establish the reliability of the proposed model, making it a valuable tool for decision-makers in uncertain environments. This work not only bridges existing gaps in the literature but also enhances decision quality and system resilience, laying a strong foundation for the practical application of advanced optimization techniques in real-world scenarios.

\bibliographystyle{IEEETrans} 
\bibliography{ref}

\end{document}